\newtheorem{theorem}{Theorem}
\newtheorem{proposition}[theorem]{Proposition}
\newenvironment{proof}[1][Proof]{\noindent\textbf{#1.} }{\ \rule{0.5em}{0.5em}}
\begin{document}

\title{Precise analytic treatment of Kerr and Kerr-(anti) de Sitter \\
black holes as gravitational lenses. }
\author{G. V. Kraniotis\thanks{%
Email: gkraniot@cc.uoi.gr.} \\
The University of Ioannina, Department of Physics, \\
Section of Theoretical Physics, GR-451 10 Ioannina, Greece.}
\maketitle

\begin{abstract}
The null geodesic equations that describe motion of photons in Kerr
spacetime are solved exactly in the presence of the cosmological constant $%
\Lambda .$ The exact solution for the deflection angle for generic light
orbits (i.e. non-polar, non-equatorial) is
calculated in terms of the generalized hypergeometric functions of Appell
and Lauricella.

We then consider the more involved issue in which the black hole acts as a
`gravitational lens'. The constructed Kerr black hole gravitational lens
geometry consists of an observer and a source located far away and placed at
arbitrary inclination with respect to black hole's equatorial plane. The
resulting lens equations are solved elegantly in terms of Appell-Lauricella
hypergeometric functions and the Weierstra\ss\ elliptic function. We then,
systematically, apply our closed form solutions for calculating the image
and source positions of generic photon orbits that solve the lens equations
and reach an observer located at various values of the polar angle for
various values of the Kerr parameter and the first integrals of motion. In
this framework, the magnification factors for generic orbits are calculated
in closed analytic form for the first time. The exercise is repeated with
the appropriate modifications for the case of non-zero cosmological constant.
\end{abstract}

\section{\protect\bigskip Introduction}

The issue of the bending of light (and the associated phenomenon of
gravitational lensing) from the gravitational field of a celectial body
(planet, star, black hole, galaxy) has been a very active and fruitful area
of research for fundamental physics. The Munich astronomer Johann Georg von
Soldner in 1801 \cite{Soldner} using Newtonian mechanics and assuming a
corpuscular theory for light derived \ a value for the deflection angle in
the Sun's gravitational field $\Delta \phi ^{S}=0^{\prime \prime}.875.$ Later, Einstein using the equations of general relativity \cite%
{Albert1}, derived a value of $\sim 1^{\prime \prime}.75$ which is consistent with the findings of Eddington's solar eclipse
experiment and subsequent measurements.

Despite the importance of the gravitational bending of light, in unravelling
the nature of the gravitational field and its cosmological implications not
many exact analytic results for the deflection angle of light orbits from
the gravitational field of important astrophysical objects are known in the
literature.

Recently, progress has been achieved \cite{Kraniotis} in obtaining the
closed form (strong-field) solution for the deflection angle of an
equatorial light ray in the Kerr gravitational field (spinning black hole,
rotating mass). Thus, going beyond the corresponding calculation for the
static gravitational field of a Schwarzschild black hole \cite{Albert2},\cite%
{ohanian}. More specifically, the closed form solution for the gravitational
bending of light for an equatorial photon orbit in Kerr spacetime was
derived and expressed elegantly in terms of Lauricella's hypergeometric
function $F_{D}.$ It was then applied, to calculate the deflection angle for
various values of the impact parameter and the spin of the galactic centre
black hole Sgr A*. The results exhibited clearly, the strong dependence of
the gravitational bending of light, on the spin of the black hole for small
values of impact parameter (frame dragging effects) \cite{Kraniotis}. In
addition, in \cite{Kraniotis}, the exact solution for (unstable) spherical
bound polar and non-polar photonic orbits was derived. However, the closed
form analytic solution for the important class of generic (i.e. non-polar
and non-equatorial) unbound light orbits were left out of the discussion in 
\cite{Kraniotis}.

One of the unsolved related important problems so far was the full analytic
treatment of the Kerr and Kerr-de Sitter black holes as gravitational
lenses. The closed form solution of this problem is imperative since the
Kerr black hole acts as a very strong gravitational lense and we may probe
general relativity, through the phenomenon of the bending of light induced
by the space time curvature of a spinning black hole, at the strong
gravitational field regime. A completely unexplored region of paramount
importance for fundamental physics and cosmology.

It is therefore the purpose of the present paper to calculate the exact
solution for the deflection angle for a generic photon orbit in the
asymptotically flat Kerr spacetime therefore generalizing the results in 
\cite{Kraniotis} and solve in closed analytic form the more involved problem
of treating the rotating black hole as a gravitational lense. The
constructed Kerr black hole gravitational lens geometry consists of an
observer and a source located far away and placed at arbitrary inclination
with respect to black hole's equatorial plane.

More specifically, we solve for the \textit{first time }in closed analytic
form, the resulting lens equations in Kerr geometry, in terms of the Weierstra%
\ss\ elliptic function $\wp (z)$, equation (\ref{WeierstrassKarl}), and in
terms of generalized hypergeometric functions of Appell-Lauricella equations
(\ref{Constraint1one}), (\ref{Lauricella4}), (\ref{KleistiGwniaki2}), (\ref%
{AzimuPigis}).

In addition, we calculate for the \textit{first time } exactly the resulting
magnification factors for generic light orbits in terms of the
hypergeometric functions of Appell and Lauricella. Our closed form form
solutions for the source and image potitions of the lens equations and the
corresponding magnification factors represent an important progress step in
the extraction of the phenomenological and astrophysical implications of
spinning black holes.

The resulting theory is of paramount importance for the galactic centre
studies given the strong experimental evidence we have from observation of
stellar orbits and flares, that the Sagittarius A$^{\ast }$ region, at the
galactic centre of Milky-Way, harbours a supermassive rotating black hole
with mass of 4 million solar masses \cite{Ghez}, \cite{GRAVITY}. Given the
fact that the GRAVITY experiment \cite{GRAVITY} and the proposed 30 metre
telescope (TMT) \cite{Ghez} aim at an accurary of 10 $\mu \mathrm{arcs}$ the
images calculated in this work and formed near the event horizon of the
spinning black hole should be a subject of experimental scrutiny.

Previous efforts on the issue of gravitational lensing  from a Kerr black
hole were concentrated on various approximations as well as numerical
techniques  using formal integrals \cite{SerenoBozza},\cite{VAzEste}.

The material of this work is organized as follows: in section \ref{NullGeo},
we present the null geodesics in a Kerr spacetime with a cosmological
constant. In section \ref{Kerrlensgeometry} we describe the Kerr-lens
geometry and relate the first integrals of motion to the observer's image
plane coordinates. In section \ref{megethynsy} we derive a formal expression
for the magnification using the Jacobian that relates observer's image plane
coordinates to the source position. This expression involves derivatives of
the lens equations in Kerr geometry and in our contribution we shall
calculate in closed form these derivatives in terms of the generalized
hypergeometric functions of Appell-Lauricella. In section \ref%
{Shadowonthewall}, we derive constraints from the condition that a photon
escapes to infinity and it is not caught in a (unstable) spherical orbit.
These constraints on the Carter's constant and impact factor define a region
usually called the shadow of the rotating black hole. For values of the
initial conditions inside the region enclosed by the boundary of the shadow
and the line with null value for Carter's constant there is no lensing
effect since the photons cannot escape and reach an observer. We also
discuss constraints arising from the polar motion. In section \ref{AktiOlok}%
, we derive for the first time the closed form solution for the angular
integrals involved in the gravitational Kerr lens, in terms of the
generalized hypergeometric functions of Appell-Lauricella. The full exact
solution for a light ray which originates from source's polar position and
involves $m-$polar inversions before reaching the polar coordinate of the
observer is derived: equations (\ref{GvniaKleisti1}),(\ref{KleistiGwniaki2}%
). In section \ref{AktiOlok}, we perform the analytic computation of the
radial integrals involved in the Kerr-lens in terms of the hypergeometric
functions of Appell and Lauricella. In the same section, we derive the
closed form solution for the source polar position in terms of the Weierstra%
\ss\ elliptic function $\wp (z,g_{2},g_{3})$ that implements the constraint
that arises from the first lens equation (\ref{AbelIntegral}). In sections %
\ref{IsimeParatiritis}, \ref{60moiresparatiritis}, we apply our exact
solutions for the calculation of the source and image positions for various
values of the spin of the black hole and the first integrals of motion, for
an equatorial observer and an observer located at a polar angle of $\pi /3$
respectively. We exhibit the image positions on the observer's image plane.
In Appendix \ref{Veta}, we collect the definition and the integral
representation of Lauricella's multivariable hypergeometric function $F_{D}.$
In addition in appendix \ref{Veta}, we prove in the form of Propositions,
some mathematical results concerning the transformation properties of the
function $F_{D}$ which are used in the main text.

\section{Null geodesics in a Kerr-(anti) de Sitter black hole.
\label{NullGeo}}

Taking into account the contribution from the cosmological constant 
$\Lambda $, the generalization of the Kerr solution is described by the
Kerr-de Sitter metric element which in Boyer-Lindquist (BL) coordinates is
given by \cite{Stuchlik}-\cite{Demianski}:

\begin{eqnarray}
\mathrm{d}s^{2} &=&\frac{\Delta _{r}}{\Xi ^{2}\rho ^{2}}(c\mathrm{d}t-a\sin
^{2}\theta \mathrm{d}\phi )^{2}-\frac{\rho ^{2}}{\Delta _{r}}\mathrm{d}r^{2}-%
\frac{\rho ^{2}}{\Delta _{\theta }}\mathrm{d}\theta ^{2}  \notag \\
&&-\frac{\Delta _{\theta }\sin ^{2}\theta }{\Xi ^{2}\rho ^{2}}(ac\mathrm{d}%
t-(r^{2}+a^{2})\mathrm{d}\phi )^{2}
\end{eqnarray}%
\begin{equation}
\Delta _{\theta }:=1+\frac{a^{2}\Lambda }{3}\cos ^{2}\theta ,\text{ }\Xi :=1+%
\frac{a^{2}\Lambda }{3}
\end{equation}

\begin{equation}
\Delta _{r}:=\left( 1-\frac{\Lambda }{3}r^{2}\right) \left(
r^{2}+a^{2}\right) -2\frac{GM}{c^{2}}r
\end{equation}

We denote by $a$ the rotation (Kerr) parameter and $M$ denotes the mass of
the spinning black hole.

The relevant null geodesic differential equations for the calculation of the
gravitational lensing effects (lens-equation) and for the calculation of the
deflection angle are:
 
\begin{equation}
\int^{r}\frac{\mathrm{d}r}{\pm \sqrt{R}}=\int^{\theta }\frac{\mathrm{d}%
\theta }{\pm \sqrt{\Theta }}  \label{AbelIntegral}
\end{equation}%
\begin{equation}
\Delta \phi =\int \mathrm{d}\phi =\int^{\theta }-\frac{\Xi ^{2}}{\pm \Delta
_{\theta }\sin ^{2}\theta }\frac{(a\sin ^{2}\theta -\Phi )\mathrm{d}\theta }{%
\sqrt[2]{\Theta }}+\int^{r}\frac{a\Xi ^{2}}{\pm \Delta _{r}}%
[(r^{2}+a^{2})-a\Phi ]\frac{\mathrm{d}r}{\sqrt[2]{R}}
\label{Deflectionangleazi}
\end{equation}%
where

\begin{equation}
R:=\left\{ \Xi ^{2}\left[ (r^{2}+a^{2})-a\Phi \right] ^{2}-\Delta _{r}\left[
\Xi ^{2}\left( \Phi -a\right) ^{2}+\mathcal{Q}\right] \right\}
\label{quartic1}
\end{equation}%
and

\begin{equation}
\Theta :=\left\{ [\mathcal{Q}+(\Phi -a)^{2}\Xi ^{2}]\Delta _{\theta }-\frac{%
\Xi ^{2}(a\sin ^{2}\theta -\Phi )^{2}}{\sin ^{2}\theta }\right\}
\label{gwnia}
\end{equation}

We also derive the equation related to time-delay:%
\begin{equation}
ct=\int^{r}\frac{\Xi ^{2}(r^{2}+a^{2})\left[ (r^{2}+a^{2})-\Phi a\right] }{%
\pm \Delta _{r}\sqrt{R}}\mathrm{d}r-\int^{\theta }\frac{a\Xi ^{2}(a\sin
^{2}\theta -\Phi )}{\pm \Delta _{\theta }\sqrt{\Theta }}\mathrm{d}\theta
\end{equation}

The parameters $\Phi ,\mathcal{Q}$ are associated to the first integrals of
motion. The former is the impact parameter and the latter is related to the
hidden  first integral (due to the separation of variables in the
corresponding Hamilton-Jacobi partial differential equation (PDE)).

\section{ The Kerr black hole as a gravitational lens.
\label{Kerrlensgeometry}}

\subsection{Observer's image plane}

Assume without loss of generality that the observer's position is at $%
(r_{O},\theta _{O},0).$ Likewise, for the source we have $(r_{S},\theta
_{S},\phi _{S})$ . We also assume in this section that $\Lambda =0.$ In the
observer's reference frame, an incoming light ray is described by a
parametric curve $x(r),y(r),z(r),$ where $r^{2}=x^{2}+y^{2}+z^{2}.$ For
large $r$ this the usual radial BL coordinate. At the location of the
observer, the tangent vector to the parametric curve is given by: $(\mathrm{d%
}x/\mathrm{d}r)|_{r_{O}}$ $\widehat{\mathbf{x}}+(\mathrm{d}y/\mathrm{d}%
r)|_{r_{O}}\widehat{\mathbf{y}}+(\mathrm{d}z/\mathrm{d}r)|_{r_{O}}\widehat{%
\mathbf{z}}.$ This \ vector describes a straight line which intersects the $%
(\alpha ,\beta )$ plane or \textit{observer's image plane }as it is usually
called \cite{BARDEEN}-\cite{VAzEste} at $(\alpha _{i},\beta _{i})$ see fig.\ref{GravRotLens}$.$

\begin{figure}
\begin{center}
\includegraphics{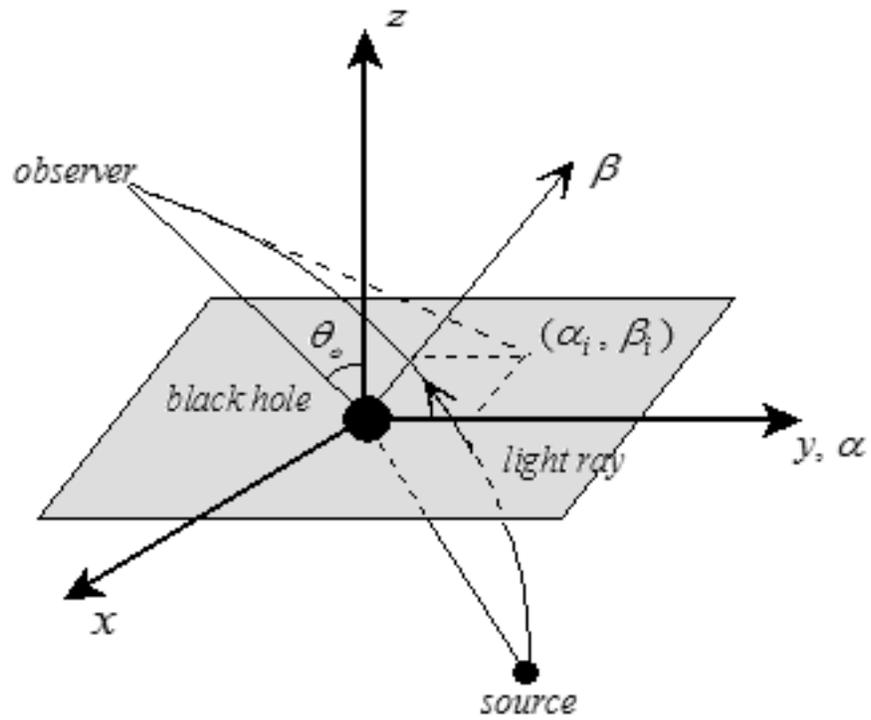}
\caption{The Kerr black hole gravitational lens geometry. The reference 
frame is chosen so that, as seen from infinity, the black hole is rotating around the $z$-axis. \label{GravRotLens}} 
\end{center}
\end{figure}

The point $(\alpha _{i},\beta _{i})$ is the point $(-\beta _{i}\cos \theta
_{O},\alpha _{i},\beta _{i}\cos \theta _{O})$ in the \ $(x,y,z)$\ system.
Our purpose now is to relate the $\alpha _{i},\beta _{i}$ variables to the
first integrals of motion $\Phi ,\mathcal{Q}.$ For this we need to use the
equation of straight line in space. A straight line can be defined from a
point $P_{1}(x_{1},y_{1},z_{1})$ on it and a vector $\overline{\epsilon }%
(\epsilon _{1,}\epsilon _{2,}\epsilon _{3})$ parallel to it. The analytic
equations of straight line are then:

\begin{equation}
\frac{x-x_{1}}{\epsilon _{1}}=\frac{y-y_{1}}{\epsilon _{2}}=\frac{z-z_{1}}{%
\epsilon _{3}}  \label{Eythia}
\end{equation}

Applying (\ref{Eythia}) we derive the equations:

\begin{equation}
\frac{-\beta _{i}\cos \theta _{O}-r_{O}\sin \theta _{O}}{r_{O}\cos \theta
_{O}\frac{\mathrm{d}\theta }{\mathrm{d}r}|_{r=r_{0}}+\sin \theta _{O}}=\frac{%
\alpha _{i}}{r_{O}\sin \theta _{O}\frac{\mathrm{d}\phi }{\mathrm{d}r}%
|_{r=r_{O}}}=\frac{\beta _{i}\cos \theta _{O}-r_{O}\cos \theta _{O}}{\cos
\theta _{O}-r_{O}\sin \theta _{O}\frac{\mathrm{d}\theta }{\mathrm{d}r}%
|_{r=r_{O}}}
\end{equation}

Solving for $\alpha _{i},\beta _{i}$ we obtain the equations:%
\begin{eqnarray}
\alpha _{i} &=&-r_{O}^{2}\sin \theta _{O}\frac{\mathrm{d}\phi }{\mathrm{d}r}%
|_{r=r_{O}} \\
\beta _{i} &=&r_{O}^{2}\frac{\mathrm{d}\theta }{\mathrm{d}r}|_{r=r_{O}}
\end{eqnarray}

Now we have from the null geodesics that:%
\begin{equation}
\frac{\mathrm{d}\theta }{\mathrm{d}r}|_{r=r_{O}}=\frac{\Theta (\theta
_{O})^{1/2}}{R(r_{O})^{1/2}}  \label{geodidat}
\end{equation}

and%
\begin{equation}
\frac{\mathrm{d}\phi }{\mathrm{d}r}|_{r=r_{O}}=\frac{\Phi }{\sqrt[2]{R(r_{O})%
}}\frac{1}{\sin ^{2}(\theta _{O})}+\frac{2aGM\frac{r_{O}}{c^{2}}-a^{2}\Phi }{%
r_{O}^{2}\left[ 1+\frac{a^{2}}{r_{O}^{2}}-\frac{2GM}{r_{O}c^{2}}\right] }%
\frac{1}{\sqrt[2]{R(r_{O})}}  \label{geode}
\end{equation}

Using eqns(\ref{geodidat}),(\ref{geode}) and assuming large observer's
distance $r_{O}($i.e. $r_{O}\longrightarrow \infty )$ we derive simplified
expressions relating the coordinates ($\alpha _{i},\beta _{i})$ on the
observer's image plane to the integrals of motion%
\begin{eqnarray}
\Phi &\simeq &-\alpha _{i}\sin \theta _{O}  \label{ObserSkym} \\
\mathcal{Q} &\simeq &\beta _{i}^{2}+(\alpha _{i}^{2}-a^{2})\cos ^{2}(\theta
_{O})  \label{Integr}
\end{eqnarray}

We can also express the position of the source on the observer's sky in
terms of its coordinates $(r_{S},\theta _{S},\phi _{S})$ and the observer
coordinates. Indeed, the equation for a straight line can be determined by
two points $P_{1}(x_{1},y_{1},z_{1}),P_{2}(x_{2},y_{2},z_{2})$:

\begin{equation}
\frac{x-x_{1}}{x_{2}-x_{1}}=\frac{y-y_{1}}{y_{2}-y_{1}}=\frac{z-z_{1}}{%
z_{2}-z_{1}}
\end{equation}%
Thus applying the above formula for the straight line connecting the
observer and the source yields the equations:

\begin{eqnarray}
\alpha _{S} &=&\frac{r_{O}r_{S}\sin \theta _{S}\sin \phi _{S}}{%
r_{O}-r_{S}(\cos \theta _{S}\cos \theta _{O}+\sin \theta _{O}\sin \theta
_{S}\cos \phi _{S})}  \notag \\
\beta _{S} &=&\frac{-r_{O}r_{S}(\sin \theta _{O}\cos \theta _{S}-\sin \theta
_{S}\cos \phi _{S}\cos \theta _{O})}{r_{O}-r_{S}(\cos \theta _{S}\cos \theta
_{O}+\sin \theta _{O}\sin \theta _{S}\cos \phi _{S})}  \label{sourceskycoord}
\end{eqnarray}

\section{\protect\bigskip Magnification factors and positions of images\label%
{megethynsy}.}

In the following sections, we shall perform a detailed novel calculation of
the lens effect for the deflection of light produced by the gravitational
field of a rotating (Kerr) black hole and a cosmological Kerr black hole
(i.e. for non-zero cosmological constant $\Lambda $).

The flux of an image of an infinitesimal source \ is the product of its
surface brightness and the solid angle $\Delta \omega $ it subtends on the
sky. \ Since the former quantity is unchanged during light deflection, the
ratio of the flux of a sufficiently small image to that of its corresponding
source in the absence of the lens, is given by

\begin{equation}
\mu =\frac{\Delta \omega }{\left( \Delta \omega \right) _{0}}=\frac{1}{|J|}
\end{equation}%
where 0-subscripts denote undeflected quantities \cite{Scneider} and $J$ is
the Jacobian of the transformation $(x_{S,}\ y_{S})\rightarrow (x_{i},y_{i})$
\footnote{%
Recall in the small angles approxiamation: $\alpha _{i}\approx r_{O}$ $%
x_{i}, $ $\beta _{i}\approx r_{O}$ $y_{i}.$ Also we define: $x_{S}:=\frac{%
\alpha _{S}}{r_{O}},y_{S}:=\frac{\beta _{S}}{r_{O}}.$}. Writting $%
x_{S}=x_{S}(x_{i},y_{i}),y_{S}=y_{S}(x_{i},y_{i})$ we can find expressions
for the partial derivatives appearing in the Jacobian by differentiating
equations (\ref{AbelIntegral}) and (\ref{Deflectionangleazi}). Indeed, the
Jacobian is given by the expression:%
\begin{equation}
J=xw-zy
\end{equation}%
where we defined: $x:=\frac{\partial x_{S}}{\partial x_{i}},y:=\frac{%
\partial x_{S}}{\partial y_{i}},z:=\frac{\partial y_{S}}{\partial x_{i}},w:=%
\frac{\partial y_{S}}{\partial y_{i}}.$ Writting equations (\ref%
{AbelIntegral}) and (\ref{Deflectionangleazi}) as follows:%
\begin{eqnarray}
R_{1}(x_{i},y_{i})-A_{1}(x_{i},y_{i},x_{S},y_{S},m) &=&0  \notag \\
\Delta \phi
(x_{S},y_{S},n)-R_{2}(x_{i},y_{i})-A_{2}(x_{i},y_{i},x_{S},y_{S},m) &=&0
\label{Lenseksiswsi}
\end{eqnarray}%
we set up the following system of equations:%
\begin{eqnarray}
\beta _{1} &=&-\alpha _{1}x-\alpha _{2}z \\
\beta _{2} &=&-\alpha _{1}y-\alpha _{2}w \\
-\beta _{3} &=&\alpha _{3}x+\alpha _{4}z \\
-\beta _{4} &=&\alpha _{3}y+\alpha _{4}w
\end{eqnarray}%
where 
\fbox{$\displaystyle \alpha_1=\frac{\partial A_1}{\partial x_S},\; 
\alpha_2=\frac{\partial A_1}{\partial y_S}, \;
\alpha_3=-\frac{\partial \phi_s}{\partial x_S}-\frac{\partial A_2}{\partial x_S},\;
\alpha_4=-\frac{\partial \phi_s}{\partial y_S}-\frac{\partial A_2}{\partial y_S}$}%
,

$%
\fbox{$\displaystyle 
\beta_1=\frac{\partial R_1}{\partial x_i}-\frac{\partial A_1}{\partial x_i},\;
\beta_2=\frac{\partial R_1}{\partial y_i}-\frac{\partial A_1}{\partial y_i},\;
\beta_3=\frac{\partial R_2}{\partial x_i}+\frac{\partial A_2}{\partial x_i},\;
\beta_4=\frac{\partial R_2}{\partial y_i}+\frac{\partial A_2}{\partial y_i}$}%
.$

Solving for $x,y,z,w$ we obtain:

\begin{equation}
\mu =\frac{1}{|J|}=\left\vert \frac{\alpha _{1}\alpha _{4}-\alpha _{2}\alpha
_{3}}{\beta _{1}\beta _{4}-\beta _{2}\beta _{3}}\right\vert
\label{MegenthysiVaritiki}
\end{equation}

The parameters $n=0,1,2,\ldots $ and $m=0,1,2,\ldots $ are the number of
windings around the $z$ axis and the number of turning points in the polar
coordinate $\theta $ respectively. We shall discuss the latter in detail in
the section that follows.

\section{ The boundary of the shadow of the rotating black
hole and constraints on the parameter space\label{Shadowonthewall}.}

The condition for a photon to escape to infinity , which is also the
condition for the spherical photon orbits in Kerr spacetime \cite{Kraniotis}%
, is given by the vanishing of the quartic polynomial \ $R(r)$ and its first
derivative (also in this case $\frac{d^{2}R}{dr^{2}}|_{r=r_{f}}>0$).
Implementing these two conditions, expressions for the parameter $\Phi $ and
Carter's constant $\mathcal{Q}$ are obtained \cite{Kraniotis}, \cite{Teo}:%
\begin{equation}
\Phi =\frac{a^{2}\frac{GM}{c^{2}}+a^{2}r-3\frac{GM}{c^{2}}r^{2}+r^{3}}{%
a\left( \frac{GM}{c^{2}}-r\right) },\qquad \mathcal{Q}=-\frac{r^{3}\left(
-4a^{2}\frac{GM}{c^{2}}+r\left( \frac{-3GM}{c^{2}}+r\right) ^{2}\right) }{%
a^{2}\left( \frac{GM}{c^{2}}-r\right) ^{2}}  \label{Photonsphere}
\end{equation}

The perturbed, from the radius $r=r_{inst},$ of unstable spherical null
orbits in Kerr spacetime, and thus escaped photon, will be detected on the
observer's image plane, at the coordinates :%
\begin{eqnarray}
x_{i} &=&\frac{a^{2}(r+\frac{GM}{c^{2}})+r^{2}(r-\frac{3GM}{c^{2}})}{%
r_{O}\sin \theta _{O}a\left( r-\frac{GM}{c^{2}}\right) },  \notag \\
y_{i} &=&\frac{\pm \sqrt{-r^{3}[r(r-\frac{3GM}{c^{2}})^{2}-4a^{2}\frac{GM}{%
c^{2}}]-2a^{2}r(2a^{2}\frac{GM}{c^{2}}+r^{3}-3r\frac{G^{2}M^{2}}{c^{4}}%
)z_{O}-a^{4}(r-\frac{GM}{c^{2}})^{2}z_{O}^{2}}}{r_{O}\sin \theta _{O}a\left(
r-\frac{GM}{c^{2}}\right) }  \notag \\
&&  \label{ShadowBH}
\end{eqnarray}%
Equations (\ref{ShadowBH}) were derived by plugging into equations (\ref%
{ObserSkym}),(\ref{Integr}) the values of the parameters $\mathcal{Q},\Phi $ 
$\ $that correspond to the conditions for the photon to escape to infinity,
equations (\ref{Photonsphere}). A photon will be detected when the argument
of the square root in eqn.(\ref{ShadowBH}) is positive. In Eqn(\ref{ShadowBH}%
), $z_{O}:=\cos ^{2}\theta _{O}.$

\bigskip

With the aid of equations (\ref{ObserSkym}) and (\ref{Integr}) we derive:

\begin{equation}
\alpha _{i}^{2}+\beta _{i}^{2}=\Phi ^{2}+\mathcal{Q}+a^{2}z_{O}
\label{Radiusgeq4}
\end{equation}

Apart from the constraints expressed by equations (\ref{Photonsphere}) we
also derive \ constraints for the motion of light from the allowed polar
region:$\theta _{\min }\leq \theta _{S},\theta _{O}\leq \theta _{\max }.$
Indeed using the variable $z_{j}:=\cos ^{2}\theta _{j}$, we have $z_{m}\geq
z_{O}$ where $z_{m}$ is the positive root of:%
\begin{equation*}
-a^{2}z_{m}^{2}+(a^{2}-\mathcal{Q}-\Phi ^{2})z_{m}+\mathcal{Q}=0
\end{equation*}

Let us see how this can be understood. Defining: $z_{m}:=z_{O}-x$ we derive
the quadratic equation for $x$%
\begin{equation}
-a^{2}x^{2}-x(a^{2}-\mathcal{Q}-\Phi
^{2}-2a^{2}z_{O})-a^{2}z_{O}^{2}+z_{O}(a^{2}-\mathcal{Q}-\Phi ^{2})+\mathcal{%
Q}=0
\end{equation}%
with roots: 
\begin{eqnarray}
x_{1,2} &=&\frac{-a^{2}+\mathcal{Q}+2a^{2}z_{O}+\Phi ^{2}\mp \sqrt[2]{4a^{2}%
\mathcal{Q}+(-a^{2}+\mathcal{Q}+\Phi ^{2})^{2}}}{2a^{2}}  \notag \\
&=&\frac{(\alpha _{i}^{2}+\beta _{i}^{2})-a^{2}w_{O}\mp \sqrt[2]{\left(
(\alpha _{i}^{2}+\beta _{i}^{2})-a^{2}w_{O}\right) ^{2}+4a^{2}\beta
_{i}^{2}w_{O}}}{2a^{2}}
\end{eqnarray}%
where $w_{O}:=\sin ^{2}\theta _{O}.$ The "radius" $\Phi ^{2}+\mathcal{Q}$
must be greater or equal than the boundary of the photon region defined by
Eqs.(\ref{Photonsphere}) and the line $\mathcal{Q}=0.$ The minimum of this
value is reached when $\mathcal{Q}=0$ and $a\rightarrow 1.$The actual
minimum value is ($\Phi ^{2}(r)+\mathcal{Q(}r))_{\min }=4.$ Thus, by Eq.(\ref%
{Radiusgeq4}) we have that $\alpha _{i}^{2}+\beta _{i}^{2}\geq 4,$ and since 
$0\leq a^{2}w_{O}\leq 1,$ it follows the inequality $a^{2}w_{O}-(\alpha
_{i}^{2}+\beta _{i}^{2})<0$ and consequently, $x\leq 0.$ Thus we conclude
that $z_{m}\geq z_{O}.$ Similar arguments ensure that when $z_{S}>z_{O}$ it
follows $z_{m}\geq z_{S}$ $\cite{VAzEste}.$

\bigskip

\section{Closed form solution for the angular integrals\label{olokpolar}.}

Let us perform now the exact computation of the angular integrals which
occur in the generic photon orbits in Kerr spacetime thereby generalizing
the results of \cite{Kraniotis}. In the case under investigation, we have to
take into account the 
{\bf turning\;points}
in the polar coordinate. A generic angular polar integral can be written:%
\begin{equation}
\pm \int_{\theta _{1}}^{\theta _{2}}=\int_{\mathrm{\min (}z_{1},z_{2})}^{%
\mathrm{\max (}z_{1},z_{2})}+[1-\mathrm{sign(}\theta _{1}\circ \theta
_{2})]\int_{0}^{\mathrm{\min (}z_{1},z_{2})}
\end{equation}%
where:%
\begin{equation}
\theta _{1}\circ \theta _{2}:=\cos \theta _{1}\cos \theta _{2}
\end{equation}%
Indeed, using the variable 
\fbox{$\displaystyle z:=\cos^2\theta$}
we derive:%
\begin{equation}
-\frac{1}{2}\frac{\mathrm{d}z}{\sqrt{z}}\frac{1}{\sqrt{1-z}}=\mathrm{sign(}%
\frac{\pi }{2}-\theta )\mathrm{d}\theta
\end{equation}%
This is the result of the fact that in the interval $0\leq \theta \leq \frac{%
\pi }{2},$ $\cos \theta \geq 0$ and $\sin \theta \geq 0,$ while in the
interval $\frac{\pi }{2}\leq \theta \leq \pi ,$ $\sin \theta \geq 0$, $\cos
\theta \leq 0.$ The angular integration in the polar variable includes the
terms:%
\begin{equation}
\int^{\theta }=\pm \int_{\theta _{S}}^{\theta _{\min /\max }}\pm
\int_{\theta _{\min /\max }}^{\theta _{\max /\min }}\pm \int_{\theta
_{_{\max /\min }}}^{\theta _{_{\min /\max }}}\pm \cdots \pm \int_{\theta
_{\max /\min }}^{\theta _{O}}  \label{GonTot}
\end{equation}

The roots $z_{m},z_{3}$ (of 
\fbox{$\displaystyle \Theta(\theta)=0$}%
) are expressed in terms of the integrals of motion and the cosmological
constant by the expressions:

\bigskip

\begin{equation}
z_{m,3}=\frac{\mathcal{Q}+\Phi ^{2}\Xi ^{2}-H^{2}\pm \sqrt{(\mathcal{Q}+\Phi
^{2}\Xi ^{2}-H^{2})^{2}+4H^{2}\mathcal{Q}}}{-2H^{2}}  \label{cubicroots}
\end{equation}

and

\begin{equation}
H^{2}:=\frac{a^{2}\Lambda }{3}[\mathcal{Q}+(\Phi -a)^{2}\Xi ^{2}]+a^{2}\Xi
^{2}
\end{equation}%
For $\Lambda =0$, the turning points take the form:%
\begin{equation}
\fbox{$\displaystyle z_m=\frac{a^2-{\cal Q}-\Phi^2+
\sqrt{4a^2 {\cal Q}+(-a^2+{\cal Q}+\Phi^2)^2}}{2a^2}, $}
\end{equation}
where the subscript \textquotedblleft m\textquotedblright\ \ stands for
\textquotedblleft \textrm{min/max\textquotedblright . }The corresponding
angles are:%
\begin{equation}
\fbox{$\displaystyle \theta_{\rm min/max}={\rm Arccos}(\pm\sqrt{z_m}) $}
\end{equation}%

Now for $\theta _{j}$ and $\theta _{\mathrm{\min /\max }}$ in the same
hemisphere:%
\begin{equation}
\int_{\theta _{j}}^{\theta _{\mathrm{\min /\max }}}\frac{\mathrm{d}\theta }{%
\pm \sqrt[2]{\Theta (\theta )}}=\frac{1}{2|a|}\int_{z_{j}}^{z_{m}}\frac{%
\mathrm{d}z}{\sqrt[2]{z(z_{m}-z)(z-z_{3})}}\equiv I_{3}  \label{Firstang}
\end{equation}%
Let us now calculate the elliptic integral in eqn.(\ref{Firstang}) in 
{\em closed\;analytic\;form}%
. Applying the transformation:

\begin{equation}
z=z_{m}+\xi ^{2}(z_{j}-z_{m})  \label{PrGwnmetas}
\end{equation}%
our integral is calculated in closed form in terms of Appell's generalized
hypergeometric function $F_{1}$ of two variables:

\begin{equation}
I_{3}=\frac{1}{2|a|}\frac{\sqrt[2]{(z_{m}-z_{j})}}{\sqrt[2]{%
z_{m}(z_{m}-z_{3})}}F_{1}\left( \frac{1}{2},\frac{1}{2},\frac{1}{2},\frac{3}{%
2},\frac{z_{m}-z_{j}}{z_{m}},\frac{z_{m}-z_{j}}{z_{m}-z_{3}}\right) \frac{%
\Gamma (\frac{1}{2})\Gamma (1)}{\Gamma (3/2)}
\end{equation}%
On the other hand using the transformation:%
\begin{equation}
\fbox{$\displaystyle z=\frac{uz_jz_m-z_jz_m}{uz_j-z_m} $}
\end{equation}
we calculate in closed form:
\begin{eqnarray}
&&\frac{1}{2\left\vert a\right\vert }\int_{0}^{z_{j}}\frac{\mathrm{d}
z}{\sqrt[2]{z(z_{m}-z)(z-z_{3})}}  \notag \\
&=&\frac{1}{\left\vert a\right\vert }\frac{\sqrt[2]{z_{j}}}{z_{m}}\sqrt[2]{
\frac{z_{j}-z_{m}}{z_{3}-z_{j}}}F_{1}\left( 1,\frac{1}{2},\frac{1}{2},\frac{3
}{2},\frac{z_{j}}{z_{m}},\frac{z_{j}(z_{m}-z_{3})}{z_{m}(z_{j}-z_{3})}\right)
\notag \\
&=&\frac{1}{\left\vert a\right\vert }\frac{\sqrt[2]{\frac{z_{j}(z_{m}-z_{3})}{z_{m}(z_{j}-z_{3})}}}{\sqrt[2]{z_{m}-z_{3}}}F_{1}\left( \frac{1}{2},\frac{1}{2},\frac{1}{2},\frac{3}{2},\frac{z_{m}}{z_{m}-z_{3}}\frac{z_{j}(z_{m}-z_{3})}{z_{m}(z_{j}-z_{3})},\frac{z_{j}(z_{m}-z_{3})}{z_{m}(z_{j}-z_{3})}\right)  \notag  \label{angulAppell} \\
&&  \label{XrisiProp1}
\end{eqnarray}
In going from the second line to the third of (\ref{XrisiProp1}) we made use
of the following identity  of Appell's first generalised hypergeometric
function of two variables:

\begin{equation}
F_{1}(\alpha ,\beta ,\beta ^{\prime },\gamma ,x,y)=(1-x)^{-\beta
}(1-y)^{\gamma -\alpha -\beta ^{\prime }}F_{1}(\gamma -\alpha ,\beta ,\gamma
-\beta -\beta ^{\prime },\gamma ,\frac{x-y}{x-1},y)
\end{equation}

Likewise we derive the closed form solution for the following integral:

\begin{eqnarray}
&&\frac{1}{2|a|}\int_{0}^{z_{j}}\frac{\mathrm{d}z}{(1-z)\sqrt[2]{%
z(z_{m}-z)(z-z_{3})}}  \notag \\
&=&\frac{z_{j}}{z_{m}}\frac{1}{|a|}\frac{z_{j}-z_{m}}{1-z_{j}}\frac{1}{\sqrt[%
2]{z_{j}(z_{j}-z_{m})(z_{3}-z_{j})}}\times  \notag \\
&&F_{D}\left( 1,1,-\frac{1}{2},\frac{1}{2},\frac{3}{2},\frac{z_{j}(1-z_{m})}{%
z_{m}(1-z_{j})},\frac{z_{j}}{z_{m}},\frac{z_{j}(z_{m}-z_{3})}{%
z_{m}(z_{j}-z_{3})}\right)  \notag \\
&=&\frac{1}{|a|}\frac{z_{j}}{z_{m}}\sqrt[2]{\frac{z_{m}}{-z_{3}z_{j}}}%
F_{D}\left( \frac{1}{2},1,\frac{1}{2},\frac{1}{2},\frac{3}{2},z_{j},\frac{%
z_{j}}{z_{m}},\frac{z_{j}}{z_{3}}\right)  \label{DeuxAnguvier}
\end{eqnarray}%
Producing the last line of equation (\ref{DeuxAnguvier}) we used the
following formula for the 
{\color{blue}{Lauricella\;function}}
{\color{red}{$\displaystyle F_D$}}%
:

\begin{proposition}
\begin{eqnarray}
F_{D}(\alpha ,\beta ,\beta ^{\prime },\beta ^{\prime \prime },\gamma ,x,y,z)
&=&(1-y)^{\gamma -\alpha -\beta ^{\prime }}(1-x)^{-\beta }(1-z)^{-\beta
^{\prime \prime }}\times  \notag \\
&&F_{D}\left( \gamma -\alpha ,\beta ,\gamma -\beta -\beta ^{\prime }-\beta
^{\prime \prime },\beta ^{\prime \prime },\gamma ,\frac{x-y}{x-1},y,\frac{z-y%
}{z-1}\right)  \notag
\end{eqnarray}
\end{proposition}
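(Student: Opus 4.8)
The plan is to prove the identity directly from the Euler-type single-integral representation of the Lauricella function $F_D$ recorded in Appendix \ref{Veta}, namely
\[
F_D(\alpha,\beta,\beta',\beta'',\gamma,x,y,z)=\frac{\Gamma(\gamma)}{\Gamma(\alpha)\Gamma(\gamma-\alpha)}\int_0^1 t^{\alpha-1}(1-t)^{\gamma-\alpha-1}(1-xt)^{-\beta}(1-yt)^{-\beta'}(1-zt)^{-\beta''}\,\mathrm{d}t,
\]
valid for $\mathrm{Re}\,\gamma>\mathrm{Re}\,\alpha>0$. Since the statement is exactly the three-variable analogue of the Appell $F_1$ relation quoted just above the Proposition, I expect the same type of change of variable to carry it through. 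Concretely, I would apply the M\"obius substitution $t=\tfrac{1-s}{1-ys}$, which maps $[0,1]$ onto itself with reversed orientation ($s=0\mapsto t=1$, $s=1\mapsto t=0$) and Jacobian $\mathrm{d}t=-\tfrac{1-y}{(1-ys)^2}\,\mathrm{d}s$.

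The essential algebraic step is to record the factorizations this substitution induces,
\[
1-t=\frac{s(1-y)}{1-ys},\qquad 1-yt=\frac{1-y}{1-ys},
\]
\[
1-xt=\frac{(1-x)\left(1-\frac{x-y}{x-1}s\right)}{1-ys},\qquad 1-zt=\frac{(1-z)\left(1-\frac{z-y}{z-1}s\right)}{1-ys}.
\]
Substituting these into the integrand turns each linear factor $(1-\cdot\,t)$ into a constant times a function of $s$ of precisely the shape demanded by the target $F_D$ integrand. Note in particular that $1-yt$ collapses to a pure constant over a power of $(1-ys)$, so the old variable $y$ disappears from the $s$-dependent factors and re-emerges only through the recurring denominator; this is what makes $y$ survive unchanged as the middle argument of the transformed $F_D$.

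The second step is the exponent bookkeeping, which carries the real content. Collecting powers of $s$ gives $s^{\gamma-\alpha-1}$ and powers of $(1-s)$ give $(1-s)^{\alpha-1}$, so the new first parameter is $\gamma-\alpha$ and the complementary exponent is $\alpha$; since $\Gamma(\gamma-\alpha)\Gamma\!\left(\gamma-(\gamma-\alpha)\right)=\Gamma(\gamma-\alpha)\Gamma(\alpha)$, the $\Gamma$-prefactor is unchanged. The constants pulled out combine to $(1-y)^{\gamma-\alpha-\beta'}(1-x)^{-\beta}(1-z)^{-\beta''}$, reproducing the claimed prefactor. The one computation that genuinely must be verified is the total power of $(1-ys)$: it receives contributions from $t^{\alpha-1}$, from $(1-t)^{\gamma-\alpha-1}$, from the three factorized linear terms, and from the Jacobian, and these five exponents must sum to $-(\gamma-\beta-\beta'-\beta'')$ so that the surviving factor $(1-ys)^{-(\gamma-\beta-\beta'-\beta'')}$ is exactly the $y$-attached factor of $F_D(\gamma-\alpha,\beta,\gamma-\beta-\beta'-\beta'',\beta'',\gamma,\ldots)$. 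I expect this cancellation to be the main obstacle, albeit a purely mechanical one, precisely because so many exponents must combine correctly.

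Finally I would address rigor. The integral representation requires $\mathrm{Re}\,\gamma>\mathrm{Re}\,\alpha>0$ together with the arguments staying off the branch cuts of the fractional powers; under these hypotheses $t=\tfrac{1-s}{1-ys}$ is a genuine diffeomorphism of $[0,1]$ and the manipulation of the powers is unambiguous if one keeps the principal branch throughout (legitimate for $x,y,z$ in, say, the cut plane $\mathbb{C}\setminus[1,\infty)$). The resulting equality of the two $F_D$ expressions then holds on this domain and extends to the full common domain of holomorphy by analytic continuation in both the parameters and the variables, which justifies its use in (\ref{DeuxAnguvier}) even when the concrete arguments there fall outside the convergence polydisc of either hypergeometric series.
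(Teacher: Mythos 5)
Your proposal is correct and is essentially identical to the paper's own proof: your substitution $t=\frac{1-s}{1-ys}$ is exactly the transformation $u=\frac{1-\nu}{1-\nu y}$ that the paper applies to the Euler integral representation (\ref{IRT1}), and your factorizations of $1-t$, $1-xt$, $1-yt$, $1-zt$ together with the exponent bookkeeping (in particular the emergence of $(1-ys)^{-(\gamma-\beta-\beta'-\beta'')}$, whose total exponent indeed sums to $-(\alpha-1)-(\gamma-\alpha-1)+\beta+\beta'+\beta''-2=-(\gamma-\beta-\beta'-\beta'')$) reproduce the paper's computation step for step. Your closing remarks on branch choices and analytic continuation add rigor the paper leaves implicit, but they do not change the route.
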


\begin{proof}
Applying the transformation:%
\begin{equation}
u=\frac{1-\nu }{1-\nu y}
\end{equation}%
onto the integral:%
\begin{equation}
IR_{F_{D}}=\int_{0}^{1}u^{\alpha -1}(1-u)^{\gamma -\alpha -1}(1-ux)^{-\beta
}(1-u\text{ }y)^{-\beta ^{\prime }}(1-uz)^{-\beta ^{\prime \prime }}\mathrm{d%
}u  \label{IRT1}
\end{equation}%
we derive:%
\begin{eqnarray}
(1-u)^{\gamma -\alpha -1} &=&\left( \frac{\nu (1-y)}{1-\nu y}\right)
^{\gamma -\alpha -1},\text{\quad }(1-ux)^{-\beta }=\left( \frac{(1-x)[1-%
\frac{\nu (x-y)}{(x-1)}]}{1-\nu y}\right) ^{-\beta }  \notag \\
(1-u\text{ }y)^{-\beta ^{\prime }} &=&\frac{(1-y)^{-\beta ^{\prime }}}{%
(1-\nu y)^{-\beta ^{\prime }}},\quad (1-uz)^{-\beta ^{\prime \prime
}}=\left( \frac{(1-z)[1-\frac{\nu (z-y)}{z-1}]}{1-\nu y}\right) ^{-\beta
^{\prime \prime }}
\end{eqnarray}%
and thus we obtain the result: 
\begin{eqnarray}
IR_{F_{D}} &=&(1-y)^{\gamma -\alpha }(1-x)^{-\beta }(1-y)^{-\beta ^{\prime
}}(1-z)^{-\beta ^{\prime \prime }}\times  \notag \\
&&\int_{0}^{1}\mathrm{d}\nu \text{ }\nu ^{\gamma -\alpha -1}(1-\nu )^{\alpha
-1}(1-\nu \text{ }y)^{-(\gamma -\beta -\beta ^{\prime }-\beta ^{\prime
\prime })}(1-\nu \frac{x-y}{x-1})^{-\beta }(1-\nu \frac{z-y}{z-1})^{-\beta
^{\prime \prime }}  \notag \\
&&
\end{eqnarray}%
or 
\begin{eqnarray}
F_{D}(\alpha ,\beta ,\beta ^{\prime },\beta ^{\prime \prime },\gamma ,x,y,z)
&=&(1-y)^{\gamma -\alpha -\beta ^{\prime }}(1-x)^{-\beta }(1-z)^{-\beta
^{\prime \prime }}\times  \notag \\
&&F_{D}\left( \gamma -\alpha ,\beta ,\gamma -\beta -\beta ^{\prime }-\beta
^{\prime \prime },\beta ^{\prime \prime },\gamma ,\frac{x-y}{x-1},y,\frac{z-y%
}{z-1}\right)  \notag
\end{eqnarray}
\end{proof}

Likewise applying (\ref{PrGwnmetas}):%
\begin{eqnarray}
I_{4} &:&=\frac{\Phi }{2|a|}\int_{z_{j}}^{z_{m}}\frac{\mathrm{d}z}{%
(1-z)\sqrt[2]{z(z_{m}-z)(z-z_{3})}}  \notag \\
&=&\frac{\Phi }{2|a|}\sqrt[2]{\frac{(z_{m}-z_{j})}{z_{m}}}\frac{1}{\sqrt[2]{%
(z_{m}-z_{3})}}\frac{2}{(1-z_{m})}F_{D}\left( \frac{1}{2},1,\frac{1}{2},%
\frac{1}{2},\frac{3}{2},\frac{z_{j}-z_{m}}{1-z_{m}},\frac{z_{m}-z_{j}}{z_{m}}%
,\frac{z_{m}-z_{j}}{z_{m}-z_{3}}\right)  \notag \\
&&  \label{Angular4}
\end{eqnarray}

Let us now compute exactly the term: $\pm \int_{\theta _{\mathrm{\min /\max }%
}}^{\theta _{\mathrm{\max /\min }}},$ in (\ref{GonTot}):%
\begin{equation}
\pm \int_{\theta _{\mathrm{\min /\max }}}^{\theta _{\mathrm{\max /\min }%
}}=2\int_{0}^{z_{m}}
\end{equation}%
since $\cos ^{2}\theta _{\mathrm{\min /\max }}=z_{m}$ and $\theta _{\mathrm{%
\min }}\circ \theta _{\mathrm{\max }}=-z_{m}.$

\bigskip Equation (\ref{Angular4}) for $z_{j}=0,$ becomes :%
\begin{eqnarray}
&&\frac{\Phi }{2|a|}\frac{1}{\sqrt[2]{(z_{m}-z_{3})}}\frac{2}{(1-z_{m})}%
F_{D}\left( \frac{1}{2},1,\frac{1}{2},\frac{1}{2},\frac{3}{2},\frac{-z_{m}}{%
1-z_{m}},1,\frac{z_{m}}{z_{m}-z_{3}}\right)  \notag \\
&=&\frac{\Phi }{|a|}\frac{1}{\sqrt[2]{(z_{m}-z_{3})}}\frac{1}{(1-z_{m})}%
\frac{\pi }{2}F_{1}\left( \frac{1}{2},1,\frac{1}{2},1,\frac{-z_{m}}{1-z_{m}},%
\frac{z_{m}}{z_{m}-z_{3}}\right)  \notag \\
&=&\frac{\Phi }{|a|}\frac{1}{\sqrt[2]{(z_{m}-z_{3})}}\frac{\pi }{2}%
F_{1}\left( \frac{1}{2},1,-\frac{1}{2},1,\frac{z_{m}(1-z_{3})}{z_{m}-z_{3}},%
\frac{z_{m}}{z_{m}-z_{3}}\right)  \notag \\
&=&\frac{\Phi }{|a|}\frac{1}{\sqrt[2]{(z_{m}-z_{3})}}\frac{\pi }{2}\frac{1}{%
1-z_{3}}\left( F(\frac{1}{2},\frac{1}{2},1,\frac{z_{m}}{z_{m}-z_{3}}%
)-z_{3}F_{1}\left( \frac{1}{2},1,\frac{1}{2},1,\frac{z_{m}(1-z_{3})}{%
z_{m}-z_{3}},\frac{z_{m}}{z_{m}-z_{3}}\right) \right)  \notag \\
&&  \label{SovaroTP}
\end{eqnarray}%
On the other hand the angular integrals of the form $\pm \int_{\theta
_{S}}^{\theta _{\mathrm{\min /\max }}}$ in equation (\ref{Deflectionangleazi}%
) are solved in closed analytic form as follows:%
\begin{eqnarray}
\pm \int_{\theta _{S}}^{\theta _{\mathrm{\min /\max }}} &=&\frac{\Phi }{2|a|}%
\sqrt[2]{\frac{(z_{m}-z_{S})}{z_{m}}}\frac{1}{\sqrt[2]{z_{m}-z_{3}}}\frac{2}{%
(1-z_{m})}F_{D}\left( \frac{1}{2},1,\frac{1}{2},\frac{1}{2},\frac{3}{2},%
\frac{z_{S}-z_{m}}{1-z_{m}},\frac{z_{m}-z_{S}}{z_{m}},\frac{z_{m}-z_{S}}{%
z_{m}-z_{3}}\right)  \notag \\
&&+[1-\mathrm{sign(}\theta _{S}\circ \theta _{mS}\mathrm{)]}\frac{\Phi }{|a|}%
\frac{z_{S}}{z_{m}}\frac{z_{S}-z_{m}}{1-z_{S}}\frac{1}{\sqrt[2]{%
z_{S}(z_{S}-z_{m})(z_{3}-z_{S})}}\times  \notag \\
&&F_{D}\left( 1,1,-\frac{1}{2},\frac{1}{2},\frac{3}{2},\frac{z_{S}(1-z_{m})}{%
z_{m}(1-z_{S})},\frac{z_{S}}{z_{m}},\frac{z_{S}(z_{m}-z_{3})}{%
z_{m}(z_{S}-z_{3)}}\right)  \label{Sourceangular}
\end{eqnarray}%
An equivalent expression for the above integral is:%
\begin{eqnarray}
\pm \int_{\theta _{S}}^{\theta _{\mathrm{\min /\max }}} &=&\frac{\Phi }{2|a|}%
\sqrt[2]{\frac{(z_{m}-z_{S})}{z_{m}}}\frac{1}{\sqrt[2]{z_{m}-z_{3}}}\frac{2}{%
(1-z_{m})}F_{D}\left( \frac{1}{2},1,\frac{1}{2},\frac{1}{2},\frac{3}{2},%
\frac{z_{S}-z_{m}}{1-z_{m}},\frac{z_{S}-z_{m}}{z_{m}},\frac{z_{m}-z_{S}}{%
z_{m}-z_{3}}\right)  \notag \\
&&+[1-\mathrm{sign(}\theta _{S}\circ \theta _{mS}\mathrm{)]}\frac{\Phi }{|a|}%
\sqrt[2]{\frac{z_{S}}{z_{m}}}\sqrt[2]{\frac{z_{m}-z_{3}}{z_{S}-z_{3}}}\frac{1%
}{\sqrt[2]{z_{m}-z_{3}}}\times  \notag \\
&&F_{D}\left( \frac{1}{2},1,\frac{1}{2},-\frac{1}{2},\frac{3}{2},\frac{%
z_{S}(1-z_{3})}{z_{S}-z_{3}},\frac{z_{S}}{z_{m}}\frac{(z_{m}-z_{3})}{%
(z_{S}-z_{3})},\frac{z_{S}}{z_{S}-z_{3}}\right)  \notag \\
&=&\frac{\Phi }{2|a|}\sqrt[2]{\frac{(z_{m}-z_{S})}{z_{m}}}\frac{1}{\sqrt[2]{%
z_{m}-z_{3}}}\frac{2}{(1-z_{m})}F_{D}\left( \frac{1}{2},1,\frac{1}{2},\frac{1%
}{2},\frac{3}{2},\frac{z_{S}-z_{m}}{1-z_{m}},\frac{z_{S}-z_{m}}{z_{m}},\frac{%
z_{m}-z_{S}}{z_{m}-z_{3}}\right)  \notag \\
&&+[1-\mathrm{sign(}\theta _{S}\circ \theta _{mS}\mathrm{)]}\frac{\Phi }{|a|}%
\sqrt[2]{\frac{z_{S}}{z_{m}}}\sqrt[2]{\frac{z_{m}-z_{3}}{z_{S}-z_{3}}}\frac{1%
}{\sqrt[2]{z_{m}-z_{3}}}\times  \notag \\
&&[\frac{-z_{3}}{1-z_{3}}F_{D}\left( \frac{1}{2},1,\frac{1}{2},\frac{1}{2},%
\frac{3}{2},\frac{z_{S}(1-z_{3})}{z_{S}-z_{3}},\frac{z_{S}}{z_{m}}\frac{%
(z_{m}-z_{3})}{(z_{S}-z_{3})},\frac{z_{S}}{z_{S}-z_{3}}\right) +  \notag \\
&&\frac{1}{1-z_{3}}F_{1}\left( \frac{1}{2},\frac{1}{2},\frac{1}{2},\frac{3}{2%
},\frac{z_{S}}{z_{m}}\frac{(z_{m}-z_{3})}{(z_{S}-z_{3})},\frac{z_{S}}{%
z_{S}-z_{3}}\right) ]  \label{gwniakimetasximaolokliro}
\end{eqnarray}

\bigskip In going from equation (\ref{Sourceangular}) to equation (\ref%
{gwniakimetasximaolokliro}) we made use of the functional equation of
Lauricella's hypergeometric function $F_{D}$ , Proposition 2 (\ref%
{Gwniakimatesxi2FD}), and Proposition 3 which are proved in Appendix \ref%
{Veta}.

Now, for a light trajectory that encounters $m$ turning points $(m\geq 1)$
in the polar motion we have\footnote{%
Recall the constraints of section \ref{Shadowonthewall}.}:

\begin{equation}
\fbox{$ \displaystyle \pm \int_{\theta_S}^{\theta_{\rm {min/max}}}
\underbrace{\pm \int_{\theta_{\rm {min/max}}}^{\theta_{\rm {max/min}}}
\pm \int_{\theta_{\rm {max/min}}}^{\theta_{\rm {min/max}}}\cdots}_{m-1\;{\rm times}}\pm
\int_{\theta_{\rm {max/min}}}^{\theta_O}= $}
\end{equation}%

\begin{eqnarray}
&=&\int_{z_{S}}^{z_{m}}+[1-\mathrm{sign(\theta }_{S}\circ \theta
_{mS})]\int_{0}^{z_{S}}  \notag \\
&&+\int_{z_{O}}^{z_{m}}+[1-\mathrm{sign(\theta }_{O}\circ \theta
_{mO})]\int_{0}^{z_{O}}  \notag \\
&&+2(m-1)\int_{0}^{z_{m}}
\end{eqnarray}%
where:

\begin{equation}
\fbox{$\displaystyle \theta_{mO}:={\rm Arccos}({\rm sign}(y_i)\sqrt{z_m})=
{\rm Arccos}({\rm sign}(\beta_i)\sqrt{z_m}), $}
\end{equation}
$y_{i}$ is the possible position of the image and: 
\begin{equation}
\theta _{mS}:=\left\{ 
\begin{array}{lll}
\theta _{mO}, & m & \mathrm{odd} \\ 
\pi -\theta _{mO}, & m & \mathrm{even}%
\end{array}%
\right.
\end{equation}

Thus we have that :%
\begin{eqnarray}
A_{2}(x_{i},y_{i},x_{S,}y_{S},m) &=&2(m-1)\times \left[ \frac{\Phi }{|a|}%
\frac{1}{\sqrt[2]{(z_{m}-z_{3})}}\frac{1}{(1-z_{m})}\frac{\pi }{2}%
F_{1}\left( \frac{1}{2},1,\frac{1}{2},1,\frac{-z_{m}}{1-z_{m}},\frac{z_{m}}{%
z_{m}-z_{3}}\right) \right]  \notag \\
&&+\frac{\Phi }{2|a|}\sqrt[2]{\frac{(z_{m}-z_{S})}{z_{m}}}\frac{1}{\sqrt[2]{%
z_{m}-z_{3}}}\frac{2}{(1-z_{m})}\times  \notag \\
&&F_{D}\left( \frac{1}{2},1,\frac{1}{2},\frac{1}{2},\frac{3}{2},\frac{%
z_{S}-z_{m}}{1-z_{m}},\frac{z_{m}-z_{S}}{z_{m}},\frac{z_{m}-z_{S}}{%
z_{m}-z_{3}}\right)  \notag \\
&&+[1-\mathrm{sign(}\theta _{S}\circ \theta _{mS}\mathrm{)]}\frac{\Phi }{|a|}%
\frac{z_{S}}{z_{m}}\frac{z_{S}-z_{m}}{1-z_{S}}\frac{1}{\sqrt[2]{%
z_{S}(z_{S}-z_{m})(z_{3}-z_{S})}}\times  \notag \\
&&F_{D}\left( 1,1,-\frac{1}{2},\frac{1}{2},\frac{3}{2},\frac{z_{S}(1-z_{m})}{%
z_{m}(1-z_{S})},\frac{z_{S}}{z_{m}},\frac{z_{S}(z_{m}-z_{3})}{%
z_{m}(z_{S}-z_{3)}}\right) +  \notag \\
&&+\frac{\Phi }{2|a|}\sqrt[2]{\frac{(z_{m}-z_{O})}{z_{m}}}\frac{1}{\sqrt[2]{%
z_{m}-z_{3}}}\frac{2}{(1-z_{m})}\times  \notag \\
&&F_{D}\left( \frac{1}{2},1,\frac{1}{2},\frac{1}{2},\frac{3}{2},\frac{%
z_{O}-z_{m}}{1-z_{m}},\frac{z_{m}-z_{O}}{z_{m}},\frac{z_{m}-z_{O}}{%
z_{m}-z_{3}}\right)  \notag \\
&&[1-\mathrm{sign(}\theta _{O}\circ \theta _{mO}\mathrm{)]}\frac{\Phi }{|a|}%
\frac{z_{O}}{z_{m}}\frac{z_{O}-z_{m}}{1-z_{O}}\frac{1}{\sqrt[2]{%
z_{O}(z_{O}-z_{m})(z_{3}-z_{O})}}\times  \notag \\
&&F_{D}\left( 1,1,-\frac{1}{2},\frac{1}{2},\frac{3}{2},\frac{z_{O}(1-z_{m})}{%
z_{m}(1-z_{O})},\frac{z_{O}}{z_{m}},\frac{z_{O}(z_{m}-z_{3})}{%
z_{m}(z_{O}-z_{3)}}\right)  \notag \\
&&  \label{KleistiGwniaki2}
\end{eqnarray}

\bigskip We now calculate in closed form the angular term $%
A_{1}(x_{i},y_{i},x_{S},y_{S},m)$ which appears in equations (\ref%
{Lenseksiswsi}),(\ref{AbelIntegral}).

Indeed, the angular integrals of the form $\pm \int_{\theta _{S}}^{\theta _{%
\mathrm{\min /\max }}},$ in equation (\ref{AbelIntegral}), are computed in
closed-analytic form in terms of Appell's generalized hypergeometric
function of two variables as follows:

\begin{eqnarray}
\pm \int_{\theta _{S}}^{\theta _{\mathrm{\min /\max }}}\frac{\mathrm{d}%
\theta }{\sqrt[2]{\Theta }} &=&\frac{1}{2|a|}\frac{\sqrt[2]{(z_{m}-z_{S})}}{%
\sqrt[2]{z_{m}(z_{m}-z_{3})}}F_{1}\left( \frac{1}{2},\frac{1}{2},\frac{1}{2},%
\frac{3}{2},\frac{z_{m}-z_{S}}{z_{m}},\frac{z_{m}-z_{S}}{z_{m}-z_{3}}\right) 
\frac{\Gamma (\frac{1}{2})\Gamma (1)}{\Gamma (3/2)}  \notag \\
&&+[1-\mathrm{sign(}\theta _{s}\circ \theta _{ms})]\frac{1}{|a|}\frac{\sqrt[2%
]{\frac{z_{S}(z_{m}-z_{3})}{z_{m}(z_{S}-z_{3})}}}{\sqrt[2]{z_{m}-z_{3}}}%
\times  \notag \\
&&F_{1}\left( \frac{1}{2},\frac{1}{2},\frac{1}{2},\frac{3}{2},\frac{z_{m}}{%
z_{m}-z_{3}}\frac{z_{S}(z_{m}-z_{3})}{z_{m}(z_{S}-z_{3})},\frac{%
z_{S}(z_{m}-z_{3})}{z_{m}(z_{S}-z_{3})}\right)  \notag \\
&&  \label{SimplerMarker}
\end{eqnarray}

\bigskip Also the integral (\ref{Firstang}) is calculated for $z_{j}=0$ in
terms of ordinary Gau\ss 's hypergeometric function:%
\begin{eqnarray}
&&2(m-1)\int_{0}^{z_{m}}\frac{\mathrm{d}z}{\sqrt[2]{z(z_{m}-z)(z-z_{3})}} \\
&=&\frac{2(m-1)}{2|a|}\sqrt[2]{\frac{z_{m}}{z_{m}(z_{m}-z_{3})}}\pi F\left( 
\frac{1}{2},\frac{1}{2},1,\frac{z_{m}}{z_{m}-z_{3}}\right)
\end{eqnarray}

Thus we obtain, 
\begin{eqnarray}
A_{1}(x_{i},y_{i},x_{S,}y_{S},m) &=&2(m-1)\frac{1}{2|a|}\sqrt{\frac{z_{m}}{%
z_{m}(z_{m}-z_{3})}}\pi F\left( \frac{1}{2},\frac{1}{2},1,\frac{z_{m}}{%
z_{m}-z_{3}}\right) +  \notag \\
&&\frac{1}{2|a|}\frac{\sqrt[2]{(z_{m}-z_{S})}}{\sqrt[2]{z_{m}(z_{m}-z_{3})}}%
F_{1}\left( \frac{1}{2},\frac{1}{2},\frac{1}{2},\frac{3}{2},\frac{z_{m}-z_{S}%
}{z_{m}},\frac{z_{m}-z_{S}}{z_{m}-z_{3}}\right) \frac{\Gamma (\frac{1}{2}%
)\Gamma (1)}{\Gamma (3/2)}  \notag \\
&&+[1-\mathrm{sign(}\theta _{S}\circ \theta _{mS})]\frac{1}{|a|}\frac{\sqrt[2%
]{\frac{z_{S}(z_{m}-z_{3})}{z_{m}(z_{S}-z_{3})}}}{\sqrt[2]{z_{m}-z_{3}}}%
\times  \notag \\
&&F_{1}\left( \frac{1}{2},\frac{1}{2},\frac{1}{2},\frac{3}{2},\frac{z_{m}}{%
z_{m}-z_{3}}\frac{z_{S}(z_{m}-z_{3})}{z_{m}(z_{S}-z_{3})},\frac{%
z_{S}(z_{m}-z_{3})}{z_{m}(z_{S}-z_{3})}\right) +  \notag \\
&&\frac{1}{2|a|}\frac{\sqrt[2]{(z_{m}-z_{O})}}{\sqrt[2]{z_{m}(z_{m}-z_{3})}}%
F_{1}\left( \frac{1}{2},\frac{1}{2},\frac{1}{2},\frac{3}{2},\frac{z_{m}-z_{O}%
}{z_{m}},\frac{z_{m}-z_{O}}{z_{m}-z_{3}}\right) \frac{\Gamma (\frac{1}{2}%
)\Gamma (1)}{\Gamma (3/2)}  \notag \\
&&+[1-\mathrm{sign(}\theta _{O}\circ \theta _{mO})]\frac{1}{|a|}\frac{\sqrt[2%
]{\frac{z_{O}(z_{m}-z_{3})}{z_{m}(z_{O}-z_{3})}}}{\sqrt[2]{z_{m}-z_{3}}}%
\times  \notag \\
&&F_{1}\left( \frac{1}{2},\frac{1}{2},\frac{1}{2},\frac{3}{2},\frac{z_{m}}{%
z_{m}-z_{3}}\frac{z_{O}(z_{m}-z_{3})}{z_{m}(z_{O}-z_{3})},\frac{%
z_{O}(z_{m}-z_{3})}{z_{m}(z_{O}-z_{3})}\right)  \label{GvniaKleisti1}
\end{eqnarray}

\bigskip For $m=0$ i.e. for no turning points in the polar coordinate the
exact solutions for the angular integrals in equation (\ref{AbelIntegral}), (%
\ref{Deflectionangleazi}) become

\begin{eqnarray}
A_{1}(x_{i},y_{i},x_{S},y_{S}) &=&\pm \int_{\theta _{S}}^{\theta
_{O}}=\int_{z_{1}}^{z_{2}}+(1-\mathrm{sign(}\theta _{S}\circ \theta
_{O}))\int_{0}^{z_{1}}  \notag \\
&=&\int_{z_{1}}^{z_{m}}-\int_{z_{2}}^{z_{m}}+(1-\mathrm{sign}(\theta
_{S}\circ \theta _{O}))\int_{0}^{z_{1}}  \notag \\
&=&\frac{1}{2|a|}\frac{\sqrt[2]{z_{m}-z_{1}}}{\sqrt[2]{z_{m}(z_{m}-z_{3})}}%
F_{1}\left( \frac{1}{2},\frac{1}{2},\frac{1}{2},\frac{3}{2},\frac{z_{m}-z_{1}%
}{z_{m}},\frac{z_{m}-z_{1}}{z_{m}-z_{3}}\right) 2  \notag \\
&&-\frac{1}{2|a|}\frac{\sqrt[2]{z_{m}-z_{2}}}{\sqrt[2]{z_{m}(z_{m}-z_{3})}}%
F_{1}\left( \frac{1}{2},\frac{1}{2},\frac{1}{2},\frac{3}{2},\frac{z_{m}-z_{2}%
}{z_{m}},\frac{z_{m}-z_{2}}{z_{m}-z_{3}}\right) 2  \notag \\
&&+[1-\mathrm{sign}(\theta _{S}\circ \theta _{O})]\frac{1}{|a|}\frac{\sqrt[2]%
{\frac{z_{1}(z_{m}-z_{3})}{z_{m}(z_{1}-z_{3)}}}}{\sqrt[2]{z_{m}-z_{3}}} 
\notag \\
&&\times F_{1}\left( \frac{1}{2},\frac{1}{2},\frac{1}{2},\frac{3}{2},\frac{%
z_{m}}{z_{m}-z_{3}}\frac{z_{1}(z_{m}-z_{3})}{z_{m}(z_{1}-z_{3})},\frac{%
z_{1}(z_{m}-z_{3})}{z_{m}(z_{1}-z_{3})}\right)  \notag \\
&&  \label{MZeroA1}
\end{eqnarray}%
and 
\begin{eqnarray}
A_{2}(x_{i},y_{i},x_{S},y_{S}) &=&\frac{\Phi }{2|a|}\sqrt{\frac{(z_{m}-z_{1})%
}{z_{m}}}\frac{1}{\sqrt{z_{m}-z_{3}}}\frac{2}{1-z_{m}}  \notag \\
&&\times F_{D}\left( \frac{1}{2},1,\frac{1}{2},\frac{1}{2},\frac{3}{2},\frac{%
z_{1}-z_{m}}{1-z_{m}},\frac{z_{m}-z_{1}}{z_{m}},\frac{z_{m}-z_{1}}{%
z_{m}-z_{3}}\right)  \notag \\
&&-%
\Biggl[%
\frac{\Phi }{2|a|}\sqrt{\frac{(z_{m}-z_{2})}{z_{m}}}\frac{1}{\sqrt{%
z_{m}-z_{3}}}\frac{2}{1-z_{m}}  \notag \\
&&\times F_{D}\left( \frac{1}{2},1,\frac{1}{2},\frac{1}{2},\frac{3}{2},\frac{%
z_{2}-z_{m}}{1-z_{m}},\frac{z_{m}-z_{2}}{z_{m}},\frac{z_{m}-z_{2}}{%
z_{m}-z_{3}}\right) 
\Biggr]
\notag \\
&&+[1-\mathrm{sign(}\theta _{S}\circ \theta _{O}\mathrm{)]}\frac{\Phi }{|a|}%
\sqrt{\frac{z_{1}}{z_{m}}}\sqrt{\frac{z_{m}-z_{3}}{z_{1}-z_{3}}}\frac{1}{%
\sqrt{z_{m}-z_{3}}}  \notag \\
&&\times F_{D}\left( \frac{1}{2},1,\frac{1}{2},-\frac{1}{2},\frac{3}{2},%
\frac{z_{1}(1-z_{3})}{z_{1}-z_{3}},\frac{z_{1}}{z_{m}}\frac{z_{m}-z_{3}}{%
z_{1}-z_{3}},\frac{z_{1}}{z_{1}-z_{3}}\right)  \notag \\
&&  \label{A2Mzero}
\end{eqnarray}%
where $z_{1}:=\min (z_{S},z_{O}),$ $z_{2}:=\max (z_{S},z_{O}).$

Equations (\ref{GvniaKleisti1}),(\ref{KleistiGwniaki2}) for $m\geq 1$
turning points and (\ref{MZeroA1}),(\ref{A2Mzero}) for $m=0$ turning points
constitute our exact results for the angular integrals which appear in (\ref%
{Lenseksiswsi}) for the case of vanishing cosmological constant $\Lambda $.
\ It is time to turn our attention to the exact computation of the radial
integrals which appear in the lens-equations of the Kerr black hole.

\section{Closed form solution for the radial integrals\label{AktiOlok}.}

We now perform the radial integration assuming $\Lambda =0.$

For an observer and a source located far away from the black hole, the
relevant radial integrals can take the form:

\begin{equation}
\fbox{$\displaystyle \int^r\rightarrow -\int_{r_S}^{\alpha}+\int_{\alpha}^{r_O}\simeq2\int_{\alpha}^{\infty}$}
\end{equation}%

For instance, in the calculation of the azimuthial coordinate (\ref%
{Deflectionangleazi}) the following radial integral is involved:

\begin{equation}
\int_{\alpha }^{\infty }\frac{a}{\Delta }[(r^{2}+a^{2})-a\Phi ]\frac{%
\mathrm{d}r}{\sqrt[2]{R}}
\end{equation}%
where $\Delta :=r^{2}+a^{2}-2GM\frac{r}{c^{2}}$ . In order to calculate the
contribution to the deflection angle from the radial term we need to
integrate the above equation from the 
{\color{blue}{distance\;of\;closest\;approach}}
(e.g., from the maximum positive root of the quartic) to infinity. We denote
the roots of the quartic polynomial $R$ (eqn (\ref{quartic1}) for $\Lambda
=0)$ by $\alpha ,\beta ,\gamma ,\delta :\alpha >\beta >\gamma >\delta .$ We
manipulate first the terms:

\begin{equation}
\fbox{$\displaystyle
\int_{\alpha}^{\infty}\frac{a}{\Delta}\frac{(r^2+a^2)}{\sqrt{R}}{\rm d}r=
\int_{\alpha}^{\infty}\frac{a{\rm d}r}{\sqrt{R}}
\Biggl[1+\frac{\frac{2GM}{c^2}r}{\underbrace{r^2+a^2-\frac{2GM}{c^2}r}_{\Delta}}\Biggr]=
\int_{\alpha}^{\infty}\frac{a{\rm d}r}{\sqrt{R}}+
\int_{\alpha}^{\infty}\frac{a\frac{2GMr}{c^2}{\rm d}r}{\Delta\sqrt{R}} $}
\end{equation}%

It is enough to proceed with the term \footnote{%
The radial term $2\int_{\alpha }^{\infty }\frac{a\mathrm{d}r}{\sqrt{R}}$ is
cancelled from the angular term: $-\int \frac{a\mathrm{d}\theta }{\sqrt{%
\Theta }}.$}:

\begin{equation}
\int_{\alpha }^{\infty }\frac{a\frac{2GM}{c^{2}}r-a^{2}\Phi }{\Delta 
\sqrt[2]{R}}\mathrm{d}r
\end{equation}

Expressing the roots of $\Delta $ as $r_{+},r_{-},$ which are the radii of
the event horizon and the inner or Cauchy horizon respectively, and using
partial fractions we derive the expression:

\begin{eqnarray}
\int_{\alpha }^{\infty }\frac{a\frac{2GM}{c^{2}}r-a^{2}\Phi }{\Delta 
\sqrt[2]{R}}\mathrm{d}r &=&\int_{\alpha }^{\infty }\frac{A_{+}^{go}}{%
(r-r_{+})\sqrt[2]{R}}\mathrm{d}r+\int_{\alpha }^{\infty }\frac{%
A_{-}^{go}}{(r-r_{-})\sqrt[2]{R}}\mathrm{d}r  \notag \\
&=&\int_{\alpha }^{\infty }\frac{A_{+}^{go}}{(r-r_{+})\sqrt[2]{%
(r-\alpha )(r-\beta )(r-\gamma )(r-\delta )}}\mathrm{d}r  \notag \\
&&+\int_{\alpha }^{\infty }\frac{A_{-}^{go}}{(r-r_{-})\sqrt[2]{%
(r-\alpha )(r-\beta )(r-\gamma )(r-\delta )}}\mathrm{d}r  \notag \\
&&
\end{eqnarray}%
where $A_{\pm }^{go}$ are given by the equations

\begin{equation}
A_{\pm }^{go}=\pm \frac{(r_{\pm }a2\frac{GM}{c^{2}}-a^{2}\Phi )}{r_{+}-r_{-}}
\label{gorbitcoef}
\end{equation}

For polar orbits $\Phi =0$ and the coefficients in (\ref{gorbitcoef}) reduce
to those calculated in \cite{Kraniotis}.

We organize all roots in ascending order of magnitude as follows\footnote{%
We have the correspondence $\alpha _{\mu +1}=\alpha ,\alpha _{\mu +2}=\beta
,\alpha _{\mu -1}=r_{+}=\alpha _{\mu -2},\alpha _{\mu -3}=\gamma ,\alpha
_{\mu }=\delta .$},

\begin{equation}
\alpha _{\mu }>\alpha _{\nu }>\alpha _{i}>\alpha _{\rho }
\end{equation}%
where $\alpha _{\mu }=\alpha _{\mu +1},\alpha _{\nu }=\alpha _{\mu
+2},\alpha _{\rho }=\alpha _{\mu }$ and $\alpha _{i}=\alpha _{\mu
-i},i=1,2,3 $ and we have that $\alpha _{\mu -1}\geq \alpha _{\mu -2}>\alpha
_{\mu -3}.$ By applying the transformation

\begin{equation}
r=\frac{\omega z\alpha _{\mu +2}-\alpha _{\mu +1}}{\omega z-1}
\end{equation}%
or equivalently

\begin{equation}
z=\left( \frac{\alpha _{\mu }-\alpha _{\mu +2}}{\alpha _{\mu }-\alpha _{\mu
+1}}\right) \left( \frac{r-\alpha _{\mu +1}}{r-\alpha _{\mu +2}}\right)
\end{equation}%
where

\begin{equation}
\omega :=\frac{\alpha _{\mu }-\alpha _{\mu +1}}{\alpha _{\mu }-\alpha _{\mu
+2}}
\end{equation}%
we can bring our radial integrals into the familiar integral representation
of Lauricella's $F_{D}$ and Appell's hypergeometric function $F_{1\text{ \ }%
} $of three and two variables respectively. Indeed, we derive

\begin{eqnarray}
\Delta \phi _{r_{1}}^{go} &=&2%
\Biggl[%
\int_{0}^{1/\omega }\frac{-A_{+}^{go}\omega (\alpha _{\mu +1}-\alpha
_{\mu +2})}{H^{+}}\frac{\mathrm{d}z}{\sqrt[2]{z(1-z)}(1-\kappa _{+}^{2}z)%
\sqrt[2]{1-\mu ^{2}z}}  \notag \\
&&+\int_{0}^{1/\omega }\frac{A_{+}^{go}\omega ^{2}(\alpha _{\mu
+1}-\alpha _{\mu +2})}{H^{+}}\frac{z\mathrm{d}z}{\sqrt[2]{z(1-z)}(1-\kappa
_{+}^{2}z)\sqrt[2]{1-\mu ^{2}z}}  \notag \\
&&+\int_{0}^{1/\omega }\frac{-A_{-}^{go}\omega (\alpha _{\mu
+1}-\alpha _{\mu +2})}{H^{-}}\frac{\mathrm{d}z}{\sqrt[2]{z(1-z)}(1-\kappa
_{-}^{2}z)\sqrt[2]{1-\mu ^{2}z}}  \notag \\
&&+\int_{0}^{1/\omega }\frac{A_{-}^{go}\omega ^{2}(\alpha _{\mu
+1}-\alpha _{\mu +2})}{H^{-}}\frac{z\mathrm{d}z}{\sqrt[2]{z(1-z)}(1-\kappa
_{-}^{2}z)\sqrt[2]{1-\mu ^{2}z}}%
\Biggr]%
\end{eqnarray}%
where the moduli $\kappa _{\pm }^{2},\mu ^{2}$ are

\begin{equation}
\kappa _{\pm }^{2}=\left( \frac{\alpha _{\mu }-\alpha _{\mu +1}}{\alpha
_{\mu }-\alpha _{\mu +2}}\right) \left( \frac{\alpha _{\mu +2}-\alpha _{\mu
-1}^{\pm }}{\alpha _{\mu +1}-\alpha _{\mu -1}^{\pm }}\right) ,\qquad \mu
^{2}=\left( \frac{\alpha _{\mu }-\alpha _{\mu +1}}{\alpha _{\mu }-\alpha
_{\mu +2}}\right) \left( \frac{\alpha _{\mu +2}-\alpha _{\mu -3}}{\alpha
_{\mu +1}-\alpha _{\mu -3}}\right)
\end{equation}

Also

\begin{equation}
H^{\pm }=\sqrt[2]{\omega }(\alpha _{\mu +1}-\alpha _{\mu +2})(\alpha _{\mu
+1}-\alpha _{\mu -1}^{\pm })\sqrt[2]{\alpha _{\mu +1}-\alpha _{\mu }}\sqrt[2]%
{\alpha _{\mu +1}-\alpha _{\mu -3}}
\end{equation}%
and $\alpha _{\mu -1}^{\pm }=r_{\pm }.$ By defining a new variable $%
z^{\prime }:=\omega z$ we can express the contribution $\Delta \phi
_{r_{1}}^{go}$ ,to the deflection angle, from the above radial terms in
terms of Lauricella's hypergeometric function $F_{D}$

\begin{eqnarray}
\Delta \phi _{r_{1}}^{go} &=&2%
\Biggl[%
\frac{-2A_{+}^{go}\sqrt{\omega }(\alpha _{\mu +1}-\alpha _{\mu +2})}{H^{+}}%
F_{D}\left( \frac{1}{2},\frac{1}{2},1,\frac{1}{2},\frac{3}{2},\frac{1}{%
\omega },\kappa _{+}^{\prime 2},\mu ^{\prime 2}\right)  \notag \\
&&+\frac{A_{+}^{go}\sqrt{\omega }(\alpha _{\mu +1}-\alpha _{\mu +2})}{H^{+}}%
F_{D}\left( \frac{3}{2},\frac{1}{2},1,\frac{1}{2},\frac{5}{2},\frac{1}{%
\omega },\kappa _{+}^{\prime 2},\mu ^{\prime 2}\right) \frac{\Gamma
(3/2)\Gamma (1)}{\Gamma (5/2)}  \notag \\
&&+\frac{-2A_{-}^{go}\sqrt{\omega }(\alpha _{\mu +1}-\alpha _{\mu +2})}{H^{-}%
}F_{D}\left( \frac{1}{2},\frac{1}{2},1,\frac{1}{2},\frac{3}{2},\frac{1}{%
\omega },\kappa _{-}^{\prime 2},\mu ^{\prime 2}\right)  \notag \\
&&+\frac{A_{-}^{go}\sqrt{\omega }(\alpha _{\mu +1}-\alpha _{\mu +2})}{H^{-}}%
F_{D}\left( \frac{3}{2},\frac{1}{2},1,\frac{1}{2},\frac{5}{2},\frac{1}{%
\omega },\kappa _{-}^{\prime 2},\mu ^{\prime 2}\right) \frac{\Gamma
(3/2)\Gamma (1)}{\Gamma (5/2)}%
\Biggr]
\notag  \label{ArxiLaurice} \\
&&
\end{eqnarray}%
where the variables of the function $F_{D}$ are given in terms of the roots
of the quartic and the radii of the event and Cauchy horizons by the
expressions

\begin{eqnarray}
\frac{1}{\omega } &=&\frac{\alpha _{\mu }-\alpha _{\mu +2}}{\alpha _{\mu
}-\alpha _{\mu +1}}=\frac{\delta -\beta }{\delta -\alpha }  \notag \\
\kappa _{\pm }^{\prime 2} &=&\frac{\alpha _{\mu +2}-\alpha _{\mu -1}^{\pm }}{%
\alpha _{\mu +1}-\alpha _{\mu -1}^{\pm }}=\frac{\beta -r_{\pm }}{\alpha
-r_{\pm }} \\
\mu ^{\prime 2} &=&\frac{\alpha _{\mu +2}-\alpha _{\mu -3}}{\alpha _{\mu
+1}-\alpha _{\mu -3}}=\frac{\beta -\gamma }{\alpha -\gamma }  \notag
\end{eqnarray}

An equivalent expression is as follows 
\begin{eqnarray}
\Delta \phi _{r_{1}}^{go} &=&2%
\Biggl[%
\frac{-2A_{+}^{go}\sqrt{\omega }(\alpha _{\mu +1}-\alpha _{\mu +2})}{H^{+}}%
F_{D}\left( \frac{1}{2},\frac{1}{2},1,\frac{1}{2},\frac{3}{2},\frac{1}{%
\omega },\kappa _{+}^{\prime 2},\mu ^{\prime 2}\right)  \notag \\
&&+\frac{A_{+}^{go}\sqrt{\omega }(\alpha _{\mu +1}-\alpha _{\mu +2})}{H^{+}}%
\Biggl(%
-\frac{1}{\kappa _{+}^{\prime 2}}F_{1}\left( \frac{1}{2},\frac{1}{2},\frac{1%
}{2},\frac{3}{2},\frac{1}{\omega },\mu ^{\prime 2}\right) 2  \notag \\
&&+\frac{1}{\kappa _{+}^{\prime 2}}F_{D}\left( \frac{1}{2},\frac{1}{2},1,%
\frac{1}{2},\frac{3}{2},\frac{1}{\omega },\kappa _{+}^{\prime 2},\mu
^{\prime 2}\right) 2%
\Biggr)
\notag \\
&&+\frac{-2A_{-}^{go}\sqrt{\omega }(\alpha _{\mu +1}-\alpha _{\mu +2})}{H^{-}%
}F_{D}\left( \frac{1}{2},\frac{1}{2},1,\frac{1}{2},\frac{3}{2},\frac{1}{%
\omega },\kappa _{-}^{\prime 2},\mu ^{\prime 2}\right)  \notag \\
&&+\frac{A_{-}^{go}\sqrt{\omega }(\alpha _{\mu +1}-\alpha _{\mu +2})}{H^{-}}%
\Biggl(%
-\frac{1}{\kappa _{-}^{\prime 2}}F_{1}\left( \frac{1}{2},\frac{1}{2},\frac{1%
}{2},\frac{3}{2},\frac{1}{\omega },\mu ^{\prime 2}\right) 2  \notag \\
&&+\frac{1}{\kappa _{-}^{\prime 2}}F_{D}\left( \frac{1}{2},\frac{1}{2},1,%
\frac{1}{2},\frac{3}{2},\frac{1}{\omega },\kappa _{-}^{\prime 2},\mu
^{\prime 2}\right) 2%
\Biggr)%
\Biggr]
\notag \\
&\equiv &R_{2}(x_{i},y_{i})  \label{Lauricella4}
\end{eqnarray}

In going from (\ref{ArxiLaurice}) to (\ref{Lauricella4}) we used the
identity proven in \cite{Kraniotis}, eqn.(52) in \cite{Kraniotis}.

Finally, the term $\int_{\alpha }^{\infty }\frac{\mathrm{d}r}{\sqrt{R%
}}$ ,is calculated in closed form in terms of Appell's first hypergeometric
function of two-variables :

\begin{eqnarray}
\int_{\alpha }^{\infty }\frac{\mathrm{d}r}{\sqrt{R}} &=&\frac{%
(\alpha _{\mu +1}-\alpha _{\mu +2})}{\sqrt[2]{(\alpha _{\mu +2}-\alpha _{\mu
+1})^{2}(\alpha _{\mu -1}-\alpha _{\mu +1})(\alpha _{\mu }-\alpha _{\mu +1})}%
}\times  \notag \\
&&\frac{\Gamma (1/2)}{\Gamma (3/2)}F_{1}\left( \frac{1}{2},\frac{1}{2},\frac{%
1}{2},\frac{3}{2},\lambda _{1}^{2},\frac{1}{\omega }\right)  \notag \\
&=&\frac{1}{\sqrt{(\alpha -\gamma )(\alpha -\delta )}}\frac{\Gamma (1/2)}{%
\Gamma (3/2)}F_{1}\left( \frac{1}{2},\frac{1}{2},\frac{1}{2},\frac{3}{2},%
\frac{\beta -\gamma }{\alpha -\gamma },\frac{\delta -\beta }{\delta -\alpha }%
\right)  \label{R1ROne}
\end{eqnarray}

We exploit further the lens-equations (\ref{Lenseksiswsi}). Indeed:%
\begin{eqnarray}
R_{1}(x_{i},y_{i})-2(m-1)\frac{1}{2|a|}\sqrt{\frac{z_{m}}{z_{m}(z_{m}-z_{3})}%
}\pi F\left( \frac{1}{2},\frac{1}{2},1,\frac{z_{m}}{z_{m}-z_{3}}\right)
+\cdots &=&\int^{\xi _{S}}\frac{\mathrm{d\xi }}{\sqrt{4\xi ^{3}-g_{2}\xi
-g_{3}}}  \notag \\
&&
\end{eqnarray}%
Inverting:

\begin{equation}
\xi _{S}=\wp \left( \frac{2(\ref{R1ROne})}{1}-2(m-1)\frac{1}{2|a|}\sqrt{%
\frac{z_{m}}{z_{m}(z_{m}-z_{3})}}\pi F\left( \frac{1}{2},\frac{1}{2},1,\frac{%
z_{m}}{z_{m}-z_{3}}\right) +\cdots +\epsilon \right)  \label{WeierstrassKarl}
\end{equation}%
while:%
\begin{equation}
-\phi _{S}=R_{2}(x_{i},y_{i})+A_{2}(x_{i},y_{i},x_{S,}y_{S},m)
\end{equation}%
where $\wp (z)$ denotes the Weierstra\ss\ elliptic\ function (which is also
a meromorphic Jacobi modular form of weight 2 ) and the Weierstra\ss\ %
invariants are given in terms of the initial conditions by:%
\begin{eqnarray}
g_{2} &=&\frac{1}{12}(\alpha +\beta )^{2}-\mathcal{Q}\frac{\alpha }{4}, \\
g_{3} &=&\frac{1}{216}(\alpha +\beta )^{3}-\mathcal{Q}\frac{\alpha ^{2}}{48}-%
\mathcal{Q}\frac{\alpha \beta }{48}
\end{eqnarray}%
Also $\alpha :=-a^{2},$ $\beta :=\mathcal{Q+}\Phi ^{2},$ $z_{S}=-\frac{\xi
_{S}+\frac{\alpha +\beta }{12}}{-\alpha /4}$ and $\epsilon $ is a constant
of intergration.

\bigskip To recapitulate our exact solutions of the lens equations (\ref%
{Lenseksiswsi}) are given by:%
\begin{eqnarray}
2\int_{\alpha }^{\infty }\frac{1}{\sqrt{R}}\mathrm{d}r
&=&A_{1}(x_{i},y_{i},x_{S},y_{S},m)\Leftrightarrow \frac{2}{\sqrt{(\alpha
-\gamma )(\alpha -\delta )}}\frac{\Gamma (1/2)}{\Gamma (3/2)}F_{1}\left( 
\frac{1}{2},\frac{1}{2},\frac{1}{2},\frac{3}{2},\frac{\beta -\gamma }{\alpha
-\gamma },\frac{\delta -\beta }{\delta -\alpha }\right) =  \notag \\
&&2(m-1)\frac{1}{2|a|}\sqrt{\frac{z_{m}}{z_{m}(z_{m}-z_{3})}}\pi F\left( 
\frac{1}{2},\frac{1}{2},1,\frac{z_{m}}{z_{m}-z_{3}}\right) +  \notag \\
&&\frac{1}{2|a|}\frac{\sqrt[2]{(z_{m}-z_{S})}}{\sqrt[2]{z_{m}(z_{m}-z_{3})}}%
F_{1}\left( \frac{1}{2},\frac{1}{2},\frac{1}{2},\frac{3}{2},\frac{z_{m}-z_{S}%
}{z_{m}},\frac{z_{m}-z_{S}}{z_{m}-z_{3}}\right) \frac{\Gamma (\frac{1}{2}%
)\Gamma (1)}{\Gamma (3/2)}  \notag \\
&&+[1-\mathrm{sign(}\theta _{S}\circ \theta _{mS})]\frac{1}{|a|}\frac{\sqrt[2%
]{\frac{z_{S}(z_{m}-z_{3})}{z_{m}(z_{S}-z_{3})}}}{\sqrt[2]{z_{m}-z_{3}}}%
\times  \notag \\
&&F_{1}\left( \frac{1}{2},\frac{1}{2},\frac{1}{2},\frac{3}{2},\frac{z_{m}}{%
z_{m}-z_{3}}\frac{z_{S}(z_{m}-z_{3})}{z_{m}(z_{S}-z_{3})},\frac{%
z_{S}(z_{m}-z_{3})}{z_{m}(z_{S}-z_{3})}\right) +  \notag \\
&&\frac{1}{2|a|}\frac{\sqrt[2]{(z_{m}-z_{O})}}{\sqrt[2]{z_{m}(z_{m}-z_{3})}}%
F_{1}\left( \frac{1}{2},\frac{1}{2},\frac{1}{2},\frac{3}{2},\frac{z_{m}-z_{O}%
}{z_{m}},\frac{z_{m}-z_{O}}{z_{m}-z_{3}}\right) \frac{\Gamma (\frac{1}{2}%
)\Gamma (1)}{\Gamma (3/2)}  \notag \\
&&+[1-\mathrm{sign(}\theta _{O}\circ \theta _{mO})]\frac{1}{|a|}\frac{\sqrt[2%
]{\frac{z_{O}(z_{m}-z_{3})}{z_{m}(z_{O}-z_{3})}}}{\sqrt[2]{z_{m}-z_{3}}}%
\times  \notag \\
&&F_{1}\left( \frac{1}{2},\frac{1}{2},\frac{1}{2},\frac{3}{2},\frac{z_{m}}{%
z_{m}-z_{3}}\frac{z_{O}(z_{m}-z_{3})}{z_{m}(z_{O}-z_{3})},\frac{%
z_{O}(z_{m}-z_{3})}{z_{m}(z_{O}-z_{3})}\right) ,  \label{Constraint1one}
\end{eqnarray}

\begin{equation}
-\phi _{S}=R_{2}(x_{i},y_{i})+A_{2}(x_{i},y_{i},x_{S},y_{S},m)
\label{Constraint2}
\end{equation}%
and equation (\ref{WeierstrassKarl}).

In the subsequent sections we shall apply our exact solutions for the lens
equations in the Kerr geometry expressed by (\ref{Constraint1one}),(\ref%
{WeierstrassKarl}),(\ref{Constraint2}) to particular cases which include: a)
an equatorial observer: $\theta _{O}=\pi /2\Rightarrow z_{O}=0$ b) a generic
observer located at $\theta _{O}=\pi /3\Rightarrow z_{O}=\frac{1}{4}$.

\section{Positions of images, source and resulting magnifications for an
equatorial observer in a Kerr black hole\label{IsimeParatiritis}.}

In this case ($\theta _{O}=\pi /2$), equations (\ref{ObserSkym}),(\ref%
{Integr}), become:%
\begin{eqnarray}
\Phi &\simeq &-\alpha _{i}\sin \theta _{O}=-\alpha _{i}  \notag \\
\mathcal{Q} &\simeq &\beta _{i}^{2}+(\alpha _{i}^{2}-a^{2})\cos ^{2}\theta
_{O}=\beta _{i}^{2}  \label{Isimerinieikona}
\end{eqnarray}%
Thus the length of the vector on the observer's image plane equals to:%
\begin{equation}
\sqrt{\alpha _{i}^{2}+\beta _{i}^{2}}=\sqrt{\Phi ^{2}+\mathcal{Q}}
\end{equation}

Furthermore, we derive the equations:

\begin{equation}
\fbox{$\displaystyle x_S:=\frac{\alpha_S}{r_O}=\frac{r_S\sin\theta_S \sin\phi_S}{r_O-r_S\sin\theta_S\cos\phi_S}$}
\end{equation}%

\begin{equation}
\fbox{$\displaystyle y_S:=\frac{\beta_S}{r_O}=\frac{-r_S \cos\theta_S}{r_O-r_S\sin\theta_S\cos\phi_S} $}
\end{equation}%
or equivalently:%
\begin{equation}
\frac{\alpha _{S}}{\beta _{S}}=-\tan \theta _{S}\sin \phi _{S}
\end{equation}

\subsection{Solution of the lens equation and the computation of $\protect%
\theta _{S},\protect\phi _{S}$, $\protect\alpha _{i},\protect\beta _{i}.$}

We now describe how we solve the lens equations (\ref{Lenseksiswsi}) using
the properties of the Weierstra\ss\ Jacobi modular form $\wp (z)$ equation (%
\ref{WeierstrassKarl}) and the computation of the radial and angular
integrals in terms of Appell-Lauricella hypergeometric functions equations (%
\ref{R1ROne}),(\ref{Lauricella4}),(\ref{GvniaKleisti1}),(\ref%
{KleistiGwniaki2}) respectively.

For a choice of initial conditions $a,\Phi ,\mathcal{Q}$ we determine values
for the observer image plane coordinates $\alpha _{i},\beta _{i}$, see
equation (\ref{Isimerinieikona}). Subsequently we determine the value of $%
z_{S}$ and therefore of $\theta _{S}$ that satisfies the equation \footnote{%
Which is the closed form solution of the radial and angular integrals of the
first of lens equations in eqn (\ref{Lenseksiswsi}).} :%
\begin{eqnarray}
2\int_{\alpha }^{\infty }\frac{1}{\sqrt{R}}\mathrm{d}r
&=&A_{1}(x_{i},y_{i},x_{S},y_{S},m)\Leftrightarrow \frac{2}{\sqrt{(\alpha
-\gamma )(\alpha -\delta )}}\frac{\Gamma (1/2)}{\Gamma (3/2)}F_{1}\left( 
\frac{1}{2},\frac{1}{2},\frac{1}{2},\frac{3}{2},\frac{\beta -\gamma }{\alpha
-\gamma },\frac{\delta -\beta }{\delta -\alpha }\right) =  \notag \\
&&2(m-1)\frac{1}{2|a|}\sqrt{\frac{z_{m}}{z_{m}(z_{m}-z_{3})}}\pi F\left( 
\frac{1}{2},\frac{1}{2},1,\frac{z_{m}}{z_{m}-z_{3}}\right) +  \notag \\
&&\frac{1}{2|a|}\frac{\sqrt[2]{(z_{m}-z_{S})}}{\sqrt[2]{z_{m}(z_{m}-z_{3})}}%
F_{1}\left( \frac{1}{2},\frac{1}{2},\frac{1}{2},\frac{3}{2},\frac{z_{m}-z_{S}%
}{z_{m}},\frac{z_{m}-z_{S}}{z_{m}-z_{3}}\right) \frac{\Gamma (\frac{1}{2}%
)\Gamma (1)}{\Gamma (3/2)}  \notag \\
&&+[1-\mathrm{sign(}\theta _{S}\circ \theta _{mS})]\frac{1}{|a|}\frac{\sqrt[2%
]{\frac{z_{S}(z_{m}-z_{3})}{z_{m}(z_{S}-z_{3})}}}{\sqrt[2]{z_{m}-z_{3}}}%
\times  \notag \\
&&F_{1}\left( \frac{1}{2},\frac{1}{2},\frac{1}{2},\frac{3}{2},\frac{z_{m}}{%
z_{m}-z_{3}}\frac{z_{S}(z_{m}-z_{3})}{z_{m}(z_{S}-z_{3})},\frac{%
z_{S}(z_{m}-z_{3})}{z_{m}(z_{S}-z_{3})}\right) +  \notag \\
&&\frac{1}{2|a|}\frac{\sqrt[2]{1}}{\sqrt[2]{(z_{m}-z_{3})}}F\left( \frac{1}{2%
},\frac{1}{2},1,\frac{z_{m}}{z_{m}-z_{3}}\right) \pi  \notag \\
&&  \label{Constraint1Equat} \\
&&  \notag
\end{eqnarray}%
using our exact solution for $z_{S\text{ }}$ in terms of the Weierstra\ss\ %
elliptic function equation (\ref{WeierstrassKarl}). For this we need to know
at which regions of the fundamental period parallelogram \ the Weierstra\ss\ %
function takes real and negative values. Indeed, the function of Weierstra%
\ss\ takes the required values at the points: $x=\frac{\omega }{l}+\omega
^{\prime },l\in 
\mathbb{R}
$ of the fundamental region ($\wp (\frac{\omega }{l}+\omega ^{\prime
};g_{2},g_{3})\in 
\mathbb{R}
^{-}$)$.$ Thus as the parameter $l$ varies we determine the value of $z_{S}$
that satisfies equations (\ref{WeierstrassKarl}), (\ref{Constraint1Equat}).
The quantities $\omega ,\omega ^{\prime }$ denote the Weierstra\ss\ %
half-periods. In the case under investigation $\omega $ is a real
half-period while $\omega ^{\prime }$ is pure imaginary. For positive
discriminant $\Delta _{c}=g_{2}^{3}-27g_{3}^{2},$ all three roots $%
e_{1},e_{2},e_{3}$ of $4z^{3}-g_{2}z-g_{3}$ are real and if the $e_{i}$ are
ordered so that $e_{1}>e_{2}>e_{3}$ we can choose the half-periods as%
\begin{equation}
\omega =\int_{e_{1}}^{\infty }\frac{\mathrm{d}t}{\sqrt{4t^{3}-g_{2}t-g_{3}}}%
,\quad \omega ^{\prime }=\mathrm{i}\int_{-\infty }^{e_{3}}\frac{\mathrm{d}t}{%
\sqrt{-4t^{3}+g_{2}t+g_{3}}}
\end{equation}%
The period ratio $\tau $ is defined by $\tau =\omega ^{\prime }/\omega .$ An
alternative expression for the real half-period $\omega $ of the Weierstra%
\ss\ elliptic function is given by the hypergeometric function of Gau\ss \footnote{The three roots are given in terms of the first integrals of motion by the expressions: $e_1=\frac{1}{24}(-a^2+{\cal Q}+\Phi^2+3\sqrt{4a^2 {\cal Q}+
(-a^2+{\cal Q}+\Phi^2)^2})$, $
e_2=\frac{1}{12}(a^2-{\cal Q}-\Phi^2)$, 
$e_3=\frac{1}{24}(-a^2+{\cal Q}+\Phi^2
-3\sqrt{4a^2 {\cal Q}+(-a^2+{\cal Q}+\Phi^2)^2})$}:
\begin{equation}
\omega =\frac{1}{\sqrt{e_{1-}e_{3}}}\frac{\pi }{2}F\left( \frac{1}{2},\frac{1%
}{2},1,\frac{e_{2}-e_{3}}{e_{1}-e_{3}}\right)
\end{equation}

Having determined $\theta _{S}$ by the procedure we just described we
determine the azimuthial position of the source $\phi _{S}$ by the second
equation of (\ref{Lenseksiswsi}):%
\begin{equation}
-\phi _{S}=R_{2}(x_{i},y_{i})+A_{2}(x_{i},y_{i},x_{S},y_{S},m)
\label{AzimuPigis}
\end{equation}%
Let us give an example. For the choice $\mathcal{Q}=24.64563\frac{G^{2}M^{2}%
}{c^{4}},\Phi =-2.719110\frac{GM}{c^{2}},a=0.6\frac{GM}{c^{2}}$ we determine 
$z_{S}=0.3161007914992452,$ $m=3$ and $\Delta \phi =-11.086,\phi _{S}
=95.1794^{\circ}.$ Keeping fixed the value of the Kerr parameter we solved the lens
equations for different values of Carter's constant $\mathcal{Q}$ and impact
factor $\Phi .$ We exhibit our results in table \ref{M2A06}\footnote{%
Assuming that the galactic centre region, SgrA* , is a Kerr black hole with
mass: $M_{\mathrm{BH}}=10^{6}%
M_{\odot}%
$and a distance from the observer to the galactic centre: $r_{O}=8$Kpc, the
second solution in Table \ref{M2A06}will require an angular resolution of $%
19.3102\mu \mathrm{arcs}$ . This is in the range of experimental accuracy
for both the TMT and GRAVITY experiments.}.

\begin{table}[tbp] \centering%
\begin{tabular}{|l|l|l|}
\hline
& $a=0.6,\mathcal{Q}=24.64563,$ $\Phi =-2.719110$ & $a=0.6,\mathcal{Q}%
=0.128, $ $\Phi =3.839$ \\ \hline
$\alpha _{i}$ ($\frac{GM}{c^{2}}$) & $2.719110$ & $-3.839$ \\ 
$\beta _{i}$ $(\frac{GM}{c^{2}})$ & $-4.9644365239$ & $0.357770876399$ \\ 
$x_{i}\left( \frac{2}{r_{O}}\frac{GM}{c^{2}}\right) $ & $1.359555$ & $%
-1.9195 $ \\ \hline
$y_{i}(\frac{2}{r_{O}}\frac{GM}{c^{2}}\mathrm{)}$ & $-2.48221826$ & $%
0.178885 $ \\ \hline
$m$ & $3$ & $3$ \\ \hline
&  &  \\ \hline
$z_{S}$ & $0.3161007914992452$ & $0.0026145818604$ \\ \hline
$\theta _{S}$ & $55.79^{\circ}$ & $87.069^{\circ}$ \\ \hline
$\Delta \phi (\rm{rad})$ & $-11.086$ & $7.09441$ \\ \hline
$\phi _{S}$ & $95.1794^{\circ}$ & $133.52^{\circ}$ \\ \hline
$\omega $ & $0.5545341990201503500$ & $0.824718843878947$ \\ \hline
$\omega ^{\prime }$ & $\text{ }1.3278669366032567973\mathrm{i}$ & $%
2.9400828459149726\mathrm{i}$ \\ \hline
\end{tabular}%
\caption{Solution of the lens equations in Kerr geometry and the predictions
for the source and image positions for an observer at $\theta_O=\pi/2,
\phi_O=0$. The number of turning points in the polar variable is three. The
values for the Kerr parameter and the impact factor $\Phi$ are in units of
$\frac{GM}{c^2}$  while those of Carter's constant ${\cal Q}$ are in units of
$\frac{G^2M^2}{c^4}$. }\label{M2A06}%
\end{table}%

Let us at this point, present a solution with a higher value for
the Kerr parameter in Table \ref{09939ObsIsime}.

\begin{figure}
\begin{center}
\includegraphics{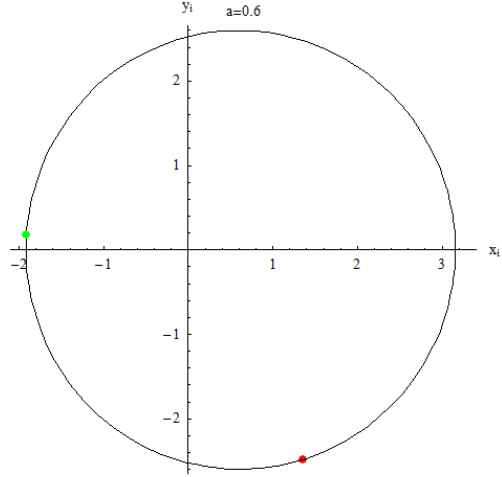}
\caption{The two images of Table \ref{M2A06} on the observer's image plane. The value of the Kerr parameter is $a=0.6\frac{GM}{c^2}$, while the observer is located at $\theta_O=\pi/2$. With red is the image solution, first column of Table \ref{M2A06} and with green the image solution, second column 
of table \ref{M2A06}.
\label{Shadow06EquOb}}
\end{center}
\end{figure}

\begin{table}[tbp] \centering%
\begin{tabular}{|l|l|}
\hline
& $a=0.9939,$ $\mathcal{Q}=27.0220588123,$ $\Phi =-2.29885534$ \\ \hline
$\alpha _{i}$ ($\frac{GM}{c^{2}}$) & $2.29885534$ \\ 
$\beta _{i}$ $(\frac{GM}{c^{2}})$ & $5.198274599547431$ \\ 
$x_{i}\left( \frac{2}{r_{O}}\frac{GM}{c^{2}}\right) $ & $1.14942767$ \\ 
\hline
$y_{i}(\frac{2}{r_{O}}\frac{GM}{c^{2}}\mathrm{)}$ & $2.5991372997737154$ \\ 
\hline
$m$ & $3$ \\ \hline
&  \\ \hline
$z_{S}$ & $0.01378435185109$ \\ \hline
$\theta _{S}$ & $83.2575^{\circ}$ \\ \hline
$\Delta \phi (\rm{rad})$ & $-11.243$ \\ \hline
$\phi _{S}$ & $104.177^{\circ}$ \\ \hline
$\omega $ & $0.5505433970950226$ \\ \hline
$\omega ^{\prime }$ & $1.1288708298860726$ $\mathrm{i}$ \\ \hline
\end{tabular}%
\caption{Solution of the lens equations in Kerr geometry and the predictions
for the source and image positions for an observer at $\theta_O=\pi/2,
\phi_O=0$ for a high value for the spin of the black hole. The number of turning points in the polar variable is three. The
values for the Kerr parameter and the impact factor $\Phi$ are in units of
$\frac{GM}{c^2}$ while those of Carter's constant ${\cal Q}$ are in units of
$\frac{G^2M^2}{c^4}$ . }\label{09939ObsIsime}%
\end{table}%

The positions of the images of Tables \ref{M2A06} and \ref{09939ObsIsime} on
the observer's image plane are displayed in fig.\ref{Shadow06EquOb} and 
fig.\ref{HighspinIsim} respectively. In the same figures the boundary of the
shadow of the spinning black hole is also displayed. 

\begin{figure}
\begin{center}
\includegraphics{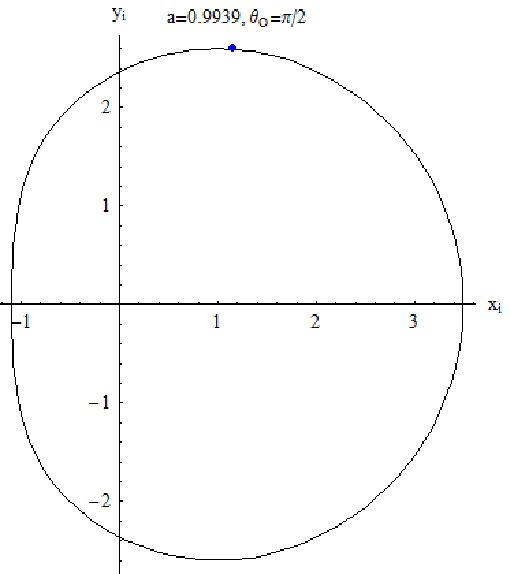}
\caption{The lens solution of Table \ref{09939ObsIsime} as it will be detected on the observer image plane by an equatorial observer. The boundary of the shadow of the black hole is also exhibited.
\label{HighspinIsim}}
\end{center}
\end{figure}

\subsection{Closed form calculation for the magnifications.}

We outline in this subsection, the closed-form calculation, of the resulting
magnification factors. It turns out that the derivatives involved in the
expression for the magnification are elegantly computed using the beautiful
property of the hypergeometric functions: namely, that the derivatives of
the hypergeometric functions of Appell-Lauricella are again hypergeometric
functions of the same type with a different set of parameters. It is a
powerful property of our formalism which we exploit to the full in what
follows.

\begin{eqnarray}
\frac{\partial (\ref{gwniakimetasximaolokliro})}{\partial x_{S}} &=&\frac{%
\partial (\ref{gwniakimetasximaolokliro})}{\partial z_{S}}\frac{\partial
z_{S}}{\partial x_{S}},  \notag \\
\frac{\partial (\ref{gwniakimetasximaolokliro})}{\partial z_{S}} &=&\frac{%
\Phi }{2|a|}\frac{1}{z_{m}}\frac{1}{(1-z_{m})}\frac{1}{\sqrt{z_{m}-z_{3}}}%
\left( \frac{z_{m}-z_{S}}{z_{m}}\right) ^{-1/2}\times  \notag \\
&&F_{D}\left( \frac{1}{2},1,\frac{1}{2},\frac{1}{2},\frac{3}{2},\frac{%
z_{S}-z_{m}}{1-z_{m}},\frac{z_{m}-z_{S}}{z_{m}},\frac{z_{m}-z_{S}}{%
z_{m}-z_{3}}\right) +  \notag \\
&&\left( \frac{-\Phi }{2|a|}\sqrt[2]{\frac{(z_{m}-z_{S})}{z_{m}}}\frac{1}{%
\sqrt[2]{z_{m}-z_{3}}}\frac{2}{(1-z_{m})}\right) \times 
\Biggl\{
\notag \\
&&F_{D}\left( \frac{3}{2},2,\frac{1}{2},\frac{1}{2},\frac{5}{2},\frac{%
z_{S}-z_{m}}{1-z_{m}},\frac{z_{m}-z_{S}}{z_{m}},\frac{z_{m}-z_{S}}{%
z_{m}-z_{3}}\right) \frac{1}{1-z_{m}}+  \notag \\
&&F_{D}\left( \frac{3}{2},1,\frac{3}{2},\frac{1}{2},\frac{5}{2},\frac{%
z_{S}-z_{m}}{1-z_{m}},\frac{z_{m}-z_{S}}{z_{m}},\frac{z_{m}-z_{S}}{%
z_{m}-z_{3}}\right) \frac{-1}{z_{m}}+  \notag \\
&&F_{D}\left( \frac{3}{2},1,\frac{1}{2},\frac{3}{2},\frac{5}{2},\frac{%
z_{S}-z_{m}}{1-z_{m}},\frac{z_{m}-z_{S}}{z_{m}},\frac{z_{m}-z_{S}}{%
z_{m}-z_{3}}\right) \frac{-1}{z_{m}-z_{3}}%
\Biggr\}%
+  \notag \\
&&\left( 1-\mathrm{sign(}\theta _{S}\circ \theta _{ms}\right) (-1)%
\Biggl[%
\Bigl[%
\frac{1}{z_{m}}\frac{z_{S}-z_{m}}{1-z_{S}}\frac{1}{\sqrt{%
z_{S}(z_{S}-z_{m})(z_{3}-z_{S})}}+  \notag \\
&&\frac{z_{S}}{z_{m}}\frac{%
z_{3}(z_{m}-3z_{S}z_{m}+2z_{S}^{2})-z_{S}(z_{m}(2-4z_{S})+z_{S}(-1+3z_{S}))}{%
2(1-z_{S})^{2}z_{S}(z_{3}-z_{S})\sqrt{z_{S}(z_{S}-z_{m})(z_{3}-z_{S})}}%
\Bigr]%
\times  \notag \\
&&F_{D}\left( 1,1,-\frac{1}{2},\frac{1}{2},\frac{3}{2},\frac{z_{S}(1-z_{m})}{%
z_{m}(1-z_{S})},\frac{z_{S}}{z_{m}},\frac{z_{S}(z_{m}-z_{3})}{%
z_{m}(z_{S}-z_{3)}}\right) +  \notag \\
&&\frac{z_{S}}{z_{m}}\frac{z_{S}-z_{m}}{(1-z_{S})}\frac{1}{\sqrt{%
z_{S}(z_{S}-z_{m})(z_{3}-z_{S})}}%
\Bigl\{
\notag \\
&&F_{D}\left( 2,2,-\frac{1}{2},\frac{1}{2},\frac{5}{2},\frac{z_{S}(1-z_{m})}{%
z_{m}(1-z_{S})},\frac{z_{S}}{z_{m}},\frac{z_{S}(z_{m}-z_{3})}{%
z_{m}(z_{S}-z_{3)}}\right) \frac{1-z_{m}}{z_{m}(1-z_{S})^{2}}+  \notag \\
&&F_{D}\left( 2,1,\frac{1}{2},\frac{1}{2},\frac{5}{2},\frac{z_{S}(1-z_{m})}{%
z_{m}(1-z_{S})},\frac{z_{S}}{z_{m}},\frac{z_{S}(z_{m}-z_{3})}{%
z_{m}(z_{S}-z_{3)}}\right) \frac{1}{z_{m}}+  \notag \\
&&F_{D}\left( 2,1,\frac{-1}{2},\frac{3}{2},\frac{5}{2},\frac{z_{S}(1-z_{m})}{%
z_{m}(1-z_{S})},\frac{z_{S}}{z_{m}},\frac{z_{S}(z_{m}-z_{3})}{%
z_{m}(z_{S}-z_{3)}}\right) \left( \frac{-z_{3}(z_{m}-z_{3})}{%
z_{m}(z_{S}-z_{3})^{2}}\right) 
\Bigr\}%
\Biggr]
\notag \\
&&  \label{PA2Pxs}
\end{eqnarray}%
Thus,%
\begin{equation}
\frac{\partial (\ref{gwniakimetasximaolokliro})}{\partial x_{S}}=(\ref%
{PA2Pxs})\times \left( -2\cos \theta _{S}\sin \theta _{S}\frac{\frac{%
r_{S}^{2}\sin \theta _{S}\cos \theta _{S}\sin \phi _{S}}{(r_{O}-r_{S}\sin
\theta _{S}\cos \phi _{S})^{2}}}{J_{1}}\right)
\end{equation}%
Now we calculate the term:$\frac{\partial (\ref{SimplerMarker})}{\partial
z_{S}}.$ Indeed, \ calculating the derivatives w.r.t. $z_{S}$ we derive the
expression:%
\begin{eqnarray}
\frac{\partial (\ref{SimplerMarker})}{\partial z_{S}} &=&\frac{1}{2|a|}\frac{%
\Gamma (1)\Gamma (1/2)}{\Gamma (3/2)}\left( -\frac{1}{2\sqrt{%
z_{m}(z_{m}-z_{3})}\sqrt{z_{m}-z_{S}}}\right) F_{1}\left( \frac{1}{2},\frac{1%
}{2},\frac{1}{2},\frac{3}{2},\frac{z_{m}-z_{S}}{z_{m}},\frac{z_{m}-z_{S}}{%
z_{m}-z_{3}}\right) +  \notag \\
&&\frac{1}{2|a|}\frac{\Gamma (1)\Gamma (1/2)}{\Gamma (3/2)}\sqrt{\frac{%
(z_{m}-z_{S})}{z_{m}(z_{m}-z_{3})}}\times 
\Biggl[%
F_{1}\left( \frac{3}{2},\frac{3}{2},\frac{1}{2},\frac{5}{2},\frac{z_{m}-z_{S}%
}{z_{m}},\frac{z_{m}-z_{S}}{z_{m}-z_{3}}\right) \left( \frac{-1}{z_{m}}%
\right) +  \notag \\
&&F_{1}\left( \frac{3}{2},\frac{1}{2},\frac{3}{2},\frac{5}{2},\frac{%
z_{m}-z_{S}}{z_{m}},\frac{z_{m}-z_{S}}{z_{m}-z_{3}}\right) \left( \frac{-1}{%
z_{m}-z_{3}}\right) 
\Biggr]%
+  \notag \\
&&\left[ 1-\mathrm{sign(}\theta _{S}\circ \theta _{ms})\right] 
\Biggl[%
\frac{1}{2|a|}\left( \frac{z_{S}(z_{m}-z_{3})}{z_{m}(z_{S}-z_{3})}\right) ^{-%
\frac{1}{2}}\left\{ \frac{(-z_{3})\sqrt{z_{m}-z_{3}}}{z_{m}(z_{S}-z_{3})^{2}}%
\right\} \times  \notag \\
&&F_{1}\left( \frac{1}{2},\frac{1}{2},\frac{1}{2},\frac{3}{2},\frac{z_{m}}{%
z_{m}-z_{3}}\frac{z_{S}(z_{m}-z_{3})}{z_{m}(z_{S}-z_{3})},\frac{%
z_{S}(z_{m}-z_{3})}{z_{m}(z_{S}-z_{3})}\right) +  \notag \\
&&\frac{1}{|a|}\frac{\sqrt{\frac{z_{S}(z_{m}-z_{3})}{z_{m}(z_{S}-z_{3})}}}{%
\sqrt{z_{m}-z_{3}}}\times 
\Bigl[%
F_{1}\left( \frac{3}{2},\frac{3}{2},\frac{1}{2},\frac{5}{2},\frac{z_{m}}{%
z_{m}-z_{3}}\frac{z_{S}(z_{m}-z_{3})}{z_{m}(z_{S}-z_{3})},\frac{%
z_{S}(z_{m}-z_{3})}{z_{m}(z_{S}-z_{3})}\right) \left( \frac{-z_{3}}{%
(z_{S}-z_{3})^{2}}\right) +  \notag \\
&&F_{1}\left( \frac{3}{2},\frac{1}{2},\frac{3}{2},\frac{5}{2},\frac{z_{m}}{%
z_{m}-z_{3}}\frac{z_{S}(z_{m}-z_{3})}{z_{m}(z_{S}-z_{3})},\frac{%
z_{S}(z_{m}-z_{3})}{z_{m}(z_{S}-z_{3})}\right) \left( \frac{%
(-z_{3})(z_{m}-z_{3})}{z_{m}(z_{S}-z_{3})^{2}}\right) 
\Bigr]%
\Biggr]
\notag \\
&&  \notag \\
&&  \label{MagA1}
\end{eqnarray}%
Now:

\begin{equation}
\fbox{$\displaystyle \alpha_1 =\frac{\partial A_1}{\partial x_S}=(\ref{MagA1})\times\frac{\partial z_S}{\partial x_S}  
= (\ref{MagA1})\times\Biggl( -2\cos\theta_S\sin\theta_S\times
\frac{\frac{r_S^2 \sin\theta_S \cos\theta_S \sin\phi_S}{(r_O-r_S \sin\theta_S\cos\phi_S)^2}}{J_1} \Biggr)
\label{alphaena}$}
\end{equation}%
and

\begin{equation}
\fbox{$\displaystyle \alpha_2=\frac{\partial A_1}{\partial y_S}=(\ref{MagA1})\times\frac{\partial z_S}{\partial y_S}  
= (\ref{MagA1})\times\Biggl( -2\cos\theta_S\sin\theta_S\times
\frac{\frac{-[r_O r_S \sin\theta_S \cos\phi_S-r_S^2 \sin^2\theta_S]}{(r_O-r_S \sin\theta_S\cos\phi_S)^2}}{J_1} \Biggr)$}
\label{alphadyo}
\end{equation}%

\bigskip While for the $\alpha _{3},\alpha _{4}$ terms which contrbute to
the expression for the magnification. equation (\ref{MegenthysiVaritiki}),
we derive the expressions:%
\begin{equation}
\fbox{$\displaystyle\alpha_3=-\frac{\partial \phi_S}{\partial x_S}-\frac{\partial A_2}{\partial x_S} 
=-\Biggl(-\frac{\frac{(r_Or_S\sin\theta_S-r_S^2\cos\phi_S)}{(r_O-r_S\sin\theta_S\cos\phi_S)^2}}{J_1}\Biggr) 
-(\ref{PA2Pxs})\times \Biggl(\frac{\frac{r_S^2 \sin\theta_S \cos\theta_S \sin\phi_S}{(r_O-r_S \sin\theta_S\cos\phi_S)^2}}{J_1}\Biggr)$}
\label{alphatria}
\end{equation}%

\begin{equation}
\fbox{$\displaystyle \alpha_4=-\frac{\partial \phi_S}{\partial y_S}-\frac{\partial A_2}{\partial y_S}=
-\frac{\frac{r_Or_S\cos\theta_S\sin\phi_S}{(r_O-r_S\sin\theta_S\cos\phi_S)^2}}{J_1}-(\ref{PA2Pxs})\times 
\frac{\frac{-[r_O r_S \sin\theta_S \cos\phi_S-r_S^2 \sin^2\theta_S]}{(r_O-r_S \sin\theta_S\cos\phi_S)^2}}{J_1} \Biggr)$}
\label{alphatessera}
\end{equation}%

\bigskip

\bigskip

\bigskip

where $J_{1}$ denotes the Jacobian:%
\begin{equation}
J_{1}=\frac{\partial (x_{S},y_{S})}{\partial (\theta _{S},\phi _{S})}
\end{equation}%
and%
\begin{eqnarray}
\frac{\partial \theta _{S}}{\partial x_{S}} &=&\frac{(r_{S}^{2}\sin \theta
_{S}\cos \theta _{S}\sin \phi _{S})/((r_{O}-r_{S}\sin \theta _{S}\cos \phi
_{S})^{2})}{J_{1}}  \notag \\
\frac{\partial \theta _{S}}{\partial y_{S}} &=&\frac{-[r_{O}r_{S}\sin \theta
_{S}\cos \phi _{S}-r_{S}^{2}\sin ^{2}\theta _{S}]/((r_{O}-r_{S}\sin \theta
_{S}\cos \phi _{S})^{2})}{J_{1}}  \notag \\
\frac{\partial \phi _{S}}{\partial x_{S}} &=&-\frac{(r_{O}r_{S}\sin \theta
_{S}-r_{S}^{2}\cos \phi _{S})/((r_{O}-r_{S}\sin \theta _{S}\cos \phi
_{S})^{2})}{J_{1}}  \notag \\
\frac{\partial \phi _{S}}{\partial y_{S}} &=&\frac{r_{O}r_{S}\cos \theta
_{S}\sin \phi _{S}/((r_{O}-r_{S}\sin \theta _{S}\cos \phi _{S})^{2})}{J_{1}}
\end{eqnarray}

In producing the results exhibited in eqns (\ref{PA2Pxs}),(\ref{MagA1}) in
our calculations for the magnification factors we made use of the important
identity of Appell's hypergeometric function $F_{1}:$

$%
\fbox
{\begin{Beqnarray} \frac{\partial^{m+n} F_1 (\alpha,\beta,\beta^{\prime},\gamma,x,y)}
{\partial x^m \partial x^n}&=&
\frac{(\alpha,m+n)(\beta,m)(\beta^{\prime},n)}{(\gamma,m+n)}\times \\ \nonumber
& & F_1 (\alpha+m+n,\beta+m,\beta^{\prime}+n,\gamma+m+n,x,y) \end{Beqnarray}}%
$ and its corresponding generalization for the fourth hypergeometric
function of Lauricella. Similar calculations that we do not exhibit in this
written account of our talk, lead to the derivation of the coefficients $%
\beta _{1},\beta _{2},\beta _{3},\beta _{4}$ in terms of the generalized
hypergeometric functions of Appell-Lauricella.

A phenomenological analysis of our exact solutions for the magnifications in
Kerr spacetime will be a subject of a separate publication \cite{KraniotisGV}%
.

\subsection{Source and image positions for an observer located at $\protect%
\theta _{O}=\frac{\protect\pi }{3}\label{60moiresparatiritis}.$}

In this case, the coordinates on the observers image plane are related to
the first integrals of motions as follows:

\begin{equation}
\Phi =-\alpha _{i}\frac{\sqrt{3}}{2},\quad Q=\beta _{i}^{2}+\left( \frac{%
4\Phi ^{2}}{3}-a^{2}\right) \frac{1}{4}.
\end{equation}%
Furthermore, our solution for the first lens equation (\ref{Constraint1one})
takes the form:
 
\begin{eqnarray}
2\int_{\alpha }^{\infty }\frac{1}{\sqrt{R}}\mathrm{d}r
&=&A_{1}(x_{i},y_{i},x_{S},y_{S},m)\Leftrightarrow \frac{2}{\sqrt{(\alpha
-\gamma )(\alpha -\delta )}}\frac{\Gamma (1/2)}{\Gamma (3/2)}F_{1}\left( 
\frac{1}{2},\frac{1}{2},\frac{1}{2},\frac{3}{2},\frac{\beta -\gamma }{\alpha
-\gamma },\frac{\delta -\beta }{\delta -\alpha }\right) =  \notag \\
&&2(m-1)\frac{1}{2|a|}\sqrt{\frac{z_{m}}{z_{m}(z_{m}-z_{3})}}\pi F\left( 
\frac{1}{2},\frac{1}{2},1,\frac{z_{m}}{z_{m}-z_{3}}\right) +  \notag \\
&&\frac{1}{2|a|}\frac{\sqrt[2]{(z_{m}-z_{S})}}{\sqrt[2]{z_{m}(z_{m}-z_{3})}}%
F_{1}\left( \frac{1}{2},\frac{1}{2},\frac{1}{2},\frac{3}{2},\frac{z_{m}-z_{S}%
}{z_{m}},\frac{z_{m}-z_{S}}{z_{m}-z_{3}}\right) \frac{\Gamma (\frac{1}{2}%
)\Gamma (1)}{\Gamma (3/2)}  \notag \\
&&+[1-\mathrm{sign(}\theta _{S}\circ \theta _{mS})]\frac{1}{|a|}\frac{\sqrt[2%
]{\frac{z_{S}(z_{m}-z_{3})}{z_{m}(z_{S}-z_{3})}}}{\sqrt[2]{z_{m}-z_{3}}}%
\times  \notag \\
&&F_{1}\left( \frac{1}{2},\frac{1}{2},\frac{1}{2},\frac{3}{2},\frac{z_{m}}{%
z_{m}-z_{3}}\frac{z_{S}(z_{m}-z_{3})}{z_{m}(z_{S}-z_{3})},\frac{%
z_{S}(z_{m}-z_{3})}{z_{m}(z_{S}-z_{3})}\right) +  \notag \\
&&\frac{1}{2|a|}\frac{\sqrt[2]{(z_{m}-\frac{1}{4})}}{\sqrt[2]{%
z_{m}(z_{m}-z_{3})}}F_{1}\left( \frac{1}{2},\frac{1}{2},\frac{1}{2},\frac{3}{%
2},\frac{z_{m}-\frac{1}{4}}{z_{m}},\frac{z_{m}-\frac{1}{4}}{z_{m}-z_{3}}%
\right) \frac{\Gamma (\frac{1}{2})\Gamma (1)}{\Gamma (3/2)}  \notag \\
&&+[1-\mathrm{sign(}\frac{\pi }{3}\circ \theta _{mO})]\frac{1}{|a|}\frac{%
\sqrt[2]{\frac{(1/4)(z_{m}-z_{3})}{z_{m}((1/4)-z_{3})}}}{\sqrt[2]{z_{m}-z_{3}%
}}\times  \notag \\
&&F_{1}\left( \frac{1}{2},\frac{1}{2},\frac{1}{2},\frac{3}{2},\frac{z_{m}}{%
z_{m}-z_{3}}\frac{(1/4)(z_{m}-z_{3})}{z_{m}(1/4-z_{3})},\frac{%
(1/4)(z_{m}-z_{3})}{z_{m}((1/4)-z_{3})}\right) .  \label{60Constraint1}
\end{eqnarray}%
Let us give now examples of solutions for the images and source positions of
equations (\ref{60Constraint1}),(\ref{WeierstrassKarl}).

For the initial conditions $\mathcal{Q=}25.64563,a=0.9939,\Phi =-3.11$ we
calculate the parameter $z_{S}$ of the source latitude position to be: $%
z_{S}=0.09097820848$ ($\theta _{S}=72.4447^{\circ}$) for $m=3.$ The position at the fundamental period parallelogram that
provides the above value of $z_{S}$ as a solution of (\ref{60Constraint1}),(%
\ref{WeierstrassKarl}) for three turning points in the polar variable is
located at:$\frac{\omega }{1.2995480690017123}+\omega ^{\prime },$ where the
fundamental half-periods of the Weierstra\ss\ elliptic function $\wp
(x,g_{2},g_{3})$ were calculated to be:%
\begin{equation}
\omega =0.52792338858688228,\quad \omega ^{\prime }=1.119903617249492\text{ }%
\mathrm{i}
\end{equation}%
Also the azimuthial position of the source was calculated to be using (\ref%
{Constraint2}) and the calculated value of $z_{S}$: $\phi _{S}=2.67231589%
\rm{rad}=153.112^{\circ}
=551205^{\prime \prime}\footnote{$\Delta \phi
=R_{2}(x_{i},y_{i})+A_{2}(x_{i},y_{i},x_{s},y_{s},m) $ was calculated to be: 
$\Delta \phi =-12.0971\rm{rad}$ so that the photons perform more than one
loop and a half around the black hole$.$}.$

\bigskip

\begin{table}[tbp] \centering%
\begin{tabular}{|l|l|l|}
\hline
& $a=0.9939,\mathcal{Q}=25.64563,$ $\Phi =-3.11$ & $a=0.52,\mathcal{Q}%
=23.64563,$ $\Phi =-2.85$ \\ \hline
$\alpha _{i}$ ($\frac{GM}{c^{2}}$) & $3.591118674$ & $3.29089$ \\ 
$\beta _{i}$ $(\frac{GM}{c^{2}})$ & $-4.7611506980$ & $-4.58320084657$ \\ 
$x_{i}\left( \frac{2}{r_{O}}\frac{GM}{c^{2}}\right) $ & $1.79556$ & $1.64545$
\\ \hline
$y_{i}(\frac{2}{r_{O}}\frac{GM}{c^{2}}\mathrm{)}$ & $-2.38058$ & $-2.2916004$
\\ \hline
$m$ & $3$ & $3$ \\ \hline
&  &  \\ \hline
$z_{S}$ & $0.09097820848$ & $0.5980171072414$ \\ \hline
$\theta _{S}$ & $72.4447^{\circ}$ & $39.3474^{\circ}$ \\ \hline
$\Delta \phi (\rm{rad})$ & $-12.0971$ & $-11.8577$ \\ \hline
$\phi _{S}$ & $153.112^{\circ}$ & $139.395^{\circ}$ \\ \hline
$\omega $ & $0.52792338858688228$ & $0.5571026427501503$ \\ \hline
$\omega ^{\prime }$ & $1.119903617249492\text{ }\mathrm{i}$ & $%
1.389041935594241\mathrm{i}$ \\ \hline
\end{tabular}%
\caption{Solution of the lens equations in Kerr geometry and the predictions
for the source and image positions for an observer at $\theta_O=\pi/3,
\phi_O=0$. The number of turning points in the polar variable is three. The
values for the Kerr parameter and the impact factor are in units of
$\frac{GM}{c^2}$. }\label{M2A099616}%
\end{table}%

\bigskip The first solution is shown on the image plane of the observer, 
Fig.\ref{60moiresHS}. We
observe that the solution lies close to the boundary of the shadow of the
black hole.

\begin{figure}
\begin{center}
\includegraphics{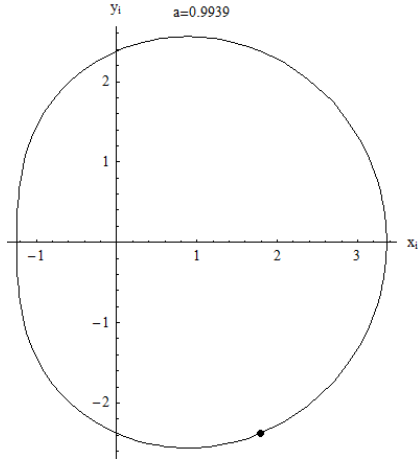}
\caption{The solution, $1^{st}$ column of Table \ref{M2A099616} as it 
will be detected on the observer image plane by an observer at $\theta_O=\pi/3,
\phi_O=0$. The boundary of the shadow of the black hole is also exhibited.
\label{60moiresHS}}
\end{center}
\end{figure}

\section{ Exact solution of the angular integrals in the
presence of the cosmological constant $\Lambda .$}

There has been a discussion in the literature as to whether or not the
cosmological constant contributes to the gravitational lensing. However, the
debate has been 
restricted
to the Schwarzschild-de Sitter spacetime \cite{Lake}, \cite{Sereno},\cite%
{Rindler}. Let us discuss now the more general case of gravitational lensing
in the Kerr-de Sitter spacetime.

The generalized solution for the angular integral (\ref{Sourceangular}) in
the presence of $\Lambda $ is given by:%
\begin{eqnarray}
\pm \int_{\theta _{S}}^{\theta _{\min /\max }} &=&\frac{\Xi ^{2}}{2|H|}\frac{%
z_{m}-z_{S}}{(1-\eta z_{m})}\frac{1}{\sqrt[2]{z_{m}(z_{m}-z_{S})(z_{m}-z_{3})%
}}\times 
\Biggl\{
\notag \\
&&\frac{\Phi }{(1-z_{m})}F_{D}\left( \frac{1}{2},1,1,\frac{1}{2},\frac{1}{2},%
\frac{3}{2},\frac{\eta (z_{S}-z_{m})}{1-\eta z_{m}},\frac{z_{S}-z_{m}}{%
1-z_{m}},\frac{z_{m}-z_{S}}{z_{m}},\frac{z_{m}-z_{S}}{z_{m}-z_{3}}\right) 
\frac{\Gamma \left( \frac{1}{2}\right) }{\Gamma \left( \frac{3}{2}\right) }+
\notag \\
&&-aF_{D}\left( \frac{1}{2},1,\frac{1}{2},\frac{1}{2},\frac{3}{2},\frac{\eta
(z_{S}-z_{m})}{1-\eta z_{m}},\frac{z_{m}-z_{S}}{z_{m}},\frac{z_{m}-z_{S}}{%
z_{m}-z_{3}}\right) \frac{\Gamma \left( \frac{1}{2}\right) }{\Gamma \left( 
\frac{3}{2}\right) }%
\Biggr\}
\notag \\
&&(1-\mathrm{sign(\theta }_{S}\circ \theta _{mS}))%
\Biggl[%
\frac{\Xi ^{2}}{\left\vert H\right\vert }\frac{z_{S}}{z_{m}}\frac{z_{m}-z_{S}%
}{1-\eta z_{S}}\frac{1}{\sqrt{z_{S}(z_{S}-z_{m})(z_{3}-z_{S})}}\times  \notag
\\
&&\left\{ -aF_{D}\left( 1,1,-\frac{1}{2},\frac{1}{2},\frac{3}{2},\lambda ,%
\frac{z_{S}}{z_{m}},\mu \right) +\frac{\Phi }{1-z_{S}}F_{D}\left( 1,1,1,-%
\frac{3}{2},\frac{1}{2},\frac{3}{2},\lambda ,\nu ,\frac{z_{S}}{z_{m}},\mu
\right) \right\} 
\Biggr]
\notag \\
&&
\end{eqnarray}%
where:

\begin{equation}
\fbox{$\displaystyle \eta:=-\frac{a^2 \Lambda}{3}, \mu=\frac{z_S}{z_m}\frac{z_m-z_3}{z_S-z_3},\lambda=
\frac{z_S}{z_m}\Bigl(\frac{1-\eta z_m}{1-\eta z_S}\Bigr), \nu=\frac{z_S}{z_m}\Bigl(\frac{1-z_m}{1-z_S}\Bigr)$}
\end{equation}%

Also the integrals $\pm \int_{\theta _{\min /\max }}^{\theta _{\max /\min
}}=2\int_{0}^{z_{m}}$ contribute the term:

\fbox
{\begin{Beqnarray}2(m-1)\times\Biggl\{&  &\frac{\Xi^2\Phi}{2|H|}\frac{z_m}{(1-\eta z_m)(1-z_m)}\frac{1}{\sqrt{z_m^2 (z_m-z_3)}}\nonumber  \\
&\times& F_D\Bigl(\frac{1}{2},1,1,\frac{1}{2},\frac{1}{2},\frac{3}{2},
\frac{\eta(-z_m)}{1-\eta z_m},\frac{-z_m}{1-z_m},1,\frac{z_m}{z_m-z_3}\Bigr)2 \nonumber \\
&+ &\frac{-\Xi^2 a}{2|H|}\frac{z_m}{\sqrt{z_m^2(z_m-z_3)}}\frac{1}{1-\eta z_m}\nonumber \\
&\times& F_D\Bigl(\frac{1}{2},1,\frac{1}{2},\frac{1}{2},\frac{3}{2},\frac{\eta(-z_m)}{1-\eta z_m},1,\frac{z_m}{z_m-z_3}\Bigr)2\Biggr\}
\end{Beqnarray}
\label{LambdaUniverse}}%

Notice that
for $\Lambda =0$ this reduces to equation (\ref{SovaroTP}).

\subsection{Closed-form solution for the radial integrals in the presence of
the cosmological constant $\Lambda .$}

Assume  $\Lambda >0.$ We need to calculate radial integrals of the form:%
\begin{equation}
\int \frac{a\Xi ^{2}}{\Delta _{r}}((r^{2}+a^{2})-a\Phi ))\frac{\mathrm{d}r}{%
\sqrt[2]{R}}
\end{equation}%
We use the technique of partial fractions from integral calculus:%
\begin{equation}
\frac{a\Xi ^{2}}{\Delta _{r}}((r^{2}+a^{2})-a\Phi ))=\frac{A^{1}}{%
r-r_{\Lambda }^{+}}+\frac{A^{2}}{r-r_{\Lambda }^{-}}+\frac{A^{3}}{r-r_{+}}+%
\frac{A^{4}}{r-r_{-}}
\end{equation}%
where $r_{\Lambda }^{+},r_{\Lambda }^{-},r_{+},r_{-}$ are the four real
roots of $\Delta _{r}.$

For instance, for 
\fbox{$r_O,r_S<r_{\Lambda}^+ $}
one of the integrals we need to calculate is:%
\begin{equation}
\frac{1}{\sqrt{\frac{1}{3}(\mathcal{Q}\Lambda +3\Xi ^{2}(1+\frac{\Lambda }{3}%
(a-\Phi )^{2})}}\int_{\alpha }^{r_{\Lambda }^{+}/2}\frac{A^{1}\mathrm{d}r}{%
(r-r_{\Lambda }^{+})\sqrt{(r-\alpha )(r-\beta )(r-\gamma )(r-\delta )}}
\end{equation}%
Indeed, we compute in closed-form: 
\begin{eqnarray}
&&\int_{\alpha }^{r_{\Lambda }^{+}/2}\frac{A^{1}\mathrm{d}r}{(r-r_{\Lambda
}^{+})\sqrt{(r-\alpha )(r-\beta )(r-\gamma )(r-\delta )}}  \notag \\
&=&\frac{\rho _{1}}{\sqrt{\rho _{1}}}H_{\Lambda }^{+}\times  \notag \\
&&F_{D}\left( \frac{1}{2},-1,\frac{1}{2},\frac{1}{2},1,\frac{3}{2},\frac{%
r_{\Lambda }^{+}-2\alpha }{r_{\Lambda }^{+}-2\beta },\frac{\beta -\gamma }{%
\alpha -\gamma }\frac{r_{\Lambda }^{+}-2\alpha }{r_{\Lambda }^{+}-2\beta },%
\frac{\beta -\delta }{\alpha -\delta }\frac{r_{\Lambda }^{+}-2\alpha }{%
r_{\Lambda }^{+}-2\beta },\frac{r_{\Lambda }^{+}-\beta }{r_{\Lambda
}^{+}-\alpha }\frac{r_{\Lambda }^{+}-2\alpha }{r_{\Lambda }^{+}-2\beta }%
\right) \frac{\Gamma (1/2)}{\Gamma (3/2)}  \notag \\
&&
\end{eqnarray}%
where%
\begin{equation}
\rho _{1}:=\frac{r_{\Lambda }^{+}-\beta }{r_{\Lambda }^{+}-\alpha }\frac{%
r_{\Lambda }^{+}-2\alpha }{r_{\Lambda }^{+}-2\beta }
\end{equation}

\bigskip Also the radial integral involved on the LHS in the
\textquotedblleft balance\textquotedblright\ lens equation (\ref%
{AbelIntegral}) is computed exactly in terms of the hypergeometric function
of Appell $F_{1}:$

\begin{eqnarray}
&&\frac{1}{\sqrt{\frac{1}{3}(\mathcal{Q}\Lambda +3\Xi ^{2}(1+\frac{\Lambda }{%
3}(a-\Phi )^{2})}}\int_{\alpha }^{r_{\Lambda }^{+}/2}\frac{\mathrm{d}r}{%
\sqrt{(r-\alpha )(r-\beta )(r-\gamma )(r-\delta )}}  \notag \\
&=&\frac{\rho _{1}}{\sqrt{\mathcal{E}}}\frac{1}{\sqrt{\omega (\gamma -\alpha
)(\delta -\alpha )}}\frac{\Gamma (1/2)}{\Gamma (3/2)}F_{1}\left( \frac{1}{2},%
\frac{1}{2},\frac{1}{2},\frac{3}{2},\frac{\beta -\gamma }{\alpha -\gamma }%
\frac{r_{\Lambda }^{+}-2\alpha }{r_{\Lambda }^{+}-2\beta },\frac{\beta
-\delta }{\alpha -\delta }\frac{r_{\Lambda }^{+}-2\alpha }{r_{\Lambda
}^{+}-2\beta }\right)  \notag \\
&&  \label{BalanceAbelLambda}
\end{eqnarray}%
where $\mathcal{E}:=\frac{1}{3}(\mathcal{Q}\Lambda +3\Xi ^{2}(1+\frac{%
\Lambda }{3}(a-\Phi )^{2}),$ $\omega :=\frac{r_{\Lambda }^{+}-\alpha }{%
r_{\Lambda }^{+}-\beta }$ and $\alpha ,\beta ,\gamma ,\delta $ denote the
roots of the quartic polynomial $R$ in the presence of $\Lambda $ eqn(\ref%
{quartic1}).

A complete phenomenological analysis of our exact solutions in the presence
of the cosmological constant $\Lambda $ will be a subject of a separate
publication \cite{KraniotisGV}$.$ Nevertheless, it is evident from the
closed form solutions we derived in this work that the cosmological constant 
\textit{does }contributes to the gravitational bending of light.

\section{Conclusions}

In this work the precise analytic treatment of Kerr and Kerr-de Sitter black
holes as gravitational lenses has been achieved. A full analytic
strong-field calculation of the source, image positions and the resulting
magnification factors has been performed. \ A full blend of important
functions from Mathematical Analysis such as the Weierstra\ss\ elliptic
function $\wp $ and the generalized multivariable hypergeometric functions
of Appell-Lauricella $F_{D}$ were deployed in deriving the closed-form
solution of the gravitational lens equations. From the exact solution of the
radial and angular Abelian integrals which are involved in the lens
equations we concluded the $\Lambda $ does contribute to the gravitational
bending of light. A full quantitative phenomenological analysis of
gravitational lensing by a Kerr deflector in the presence of $\Lambda $ is
beyond the scope of this work and will appear elsewhere \cite{KraniotisGV}.
We provided examples of image-source configurations that solve the
gravitational Kerr lens equations and exhibited their appearance on the
observer's image plane as they will be detected by an equatorial observer ($%
\theta _{O}=\pi /2,\phi _{O}=0$) $\ $and an observer located at $\theta
_{O}=\pi /3,\phi _{O}=0,$for various values of the Kerr parameter $a$, and
the first integrals of motion $\Phi ,\mathcal{Q}.$

The theory produced in this work based on the exact solution of the null
geodesic equations of motion in Kerr spacetime wll have an important
application to the Sgr A$^{\ast }$ galactic centre supermassive black hole 
\cite{KraniotisGV}. It may serve the important goal of probing general
relativity at the strong field regime through the phenomenon of
gravitational bending of light induced by the spacetime curvature.
It is complementary to other investigations which have the ambition to probe gravitation at the strong-field regime through the relativistic effects of periastron precession and frame-dragging \cite{KraniotisSstars}.

There is a fruitful synergy of various fields of Science: general
relativity, astronomy, cosmology and pure mathematics.

\section{Acknowledgments}

This is modified written account of the author's talk at NEB-14 Recent
Developments in Gravity that took place at the University of Ioannina. The
author would like to thank: L. Perivolaropoulos and P. Kanti for
inviting him to deliver his talk to such an exciting conference. He is also
oblidged to C. E. Vayonakis for discussions and comments on the manuscript. 
He warmly thanks his colleagues and his undergraduate students at the Physics 
department, University of Ioannina, for a stimulating academic enviroment.
In addition, he thanks G. Kakarantzas for discussions
and his friendship. Last but not least, he thanks his family for moral support 
during the early stages of this work.

\appendix

\section{Transformation properties of Lauricella's hypergeometric function $%
F_{D}\label{Veta}.$}

\bigskip

In this appendix we prove useful transformation properties of Lauricella's
hypergeometric function $F_{D}.$ We first introduce the function and its
integral representation:

\bigskip 
\fbox{Lauricella's\;$4^{th}$\;hypergeometric\;function\;of\;m-variables.}%

\begin{equation}
\fbox{$\displaystyle F_D(\alpha,{\bf \beta},\gamma,{\bf z})=
\sum_{n_1,n_2,\dots,n_m=0}^{\infty}\frac{(\alpha)_{n_1+\cdots n_m}(\beta_1)_{n_1} \cdots (\beta_m)_{n_m}}
{(\gamma)_{n_1+\cdots+n_m}(1)_{n_1}\cdots (1)_{n_m}} z_1^{n_1}\cdots z_m^{n_m}$}
\label{GLauri}
\end{equation}%

where 
\begin{eqnarray}
\mathbf{z} &=&(z_{1},\ldots ,z_{m}),  \notag \\
\mathbf{\beta } &=&(\beta _{1},\ldots ,\beta _{m}).
\end{eqnarray}

The Pochhammer symbol 
\fbox{$\displaystyle (\alpha)_m=(\alpha,m)$}
is defined by%
\begin{equation}
(\alpha )_{m}=\frac{\Gamma (\alpha +m)}{\Gamma (\alpha )}=\left\{ 
\begin{array}{ccc}
1, & 
{\rm if}
& m=0 \\ 
\alpha (\alpha +1)\cdots (\alpha +m-1) & \text{%
{\rm if}%
} & m=1,2,3%
\end{array}%
\right.
\end{equation}

The series admits the following integral representation:

\begin{equation}
\fbox{$\displaystyle F_D(\alpha,{\bf \beta},\gamma,{\bf z})=
\frac{\Gamma(\gamma)}{\Gamma(\alpha)\Gamma(\gamma-\alpha)}
\int_0^1 t^{\alpha-1}(1-t)^{\gamma-\alpha-1}(1-z_1 t)^{-\beta_1}\cdots (1-z_m t)^{-\beta_m} {\rm d}t $}
\label{OloklAnapa}
\end{equation}
which is valid for 
\fbox{$\displaystyle {\rm Re}(\alpha)>0,\;{\rm Re}(\gamma-\alpha)>0. $}%
. It 
{\em converges\;absolutely}
inside the m-dimensional cuboid:%
\begin{equation}
|z_{j}|<1,(j=1,\ldots ,m).
\end{equation}

\begin{proposition}
The following holds:%
\begin{eqnarray}
F_{D}(\alpha ,\beta ,\beta ^{\prime },\beta ^{\prime \prime },\gamma ,x,y,z)
&=&(1-z)^{-\alpha }\times  \notag \\
&&F_{D}\left( \alpha ,\beta ,\beta ^{\prime },\gamma -\beta -\beta ^{\prime
}-\beta ^{\prime \prime },\gamma ,\frac{z-x}{z-1},\frac{z-y}{z-1},\frac{z}{%
z-1}\right)  \notag \\
&&  \label{Gwniakimatesxi2FD}
\end{eqnarray}
\end{proposition}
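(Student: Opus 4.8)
The plan is to mimic the proof of Proposition 1: start from the Euler-type integral representation (\ref{OloklAnapa}) of $F_{D}$ in the three-variable case,
\begin{equation}
IR_{F_{D}}=\int_{0}^{1}u^{\alpha-1}(1-u)^{\gamma-\alpha-1}(1-xu)^{-\beta}(1-yu)^{-\beta^{\prime}}(1-zu)^{-\beta^{\prime\prime}}\,\mathrm{d}u,
\end{equation}
and apply a single Möbius substitution that acts on the $z$-factor. The substitution I would use is
\begin{equation}
u=\frac{\nu}{1-z+z\nu},\qquad\text{equivalently}\qquad \nu=\frac{u(1-z)}{1-uz},
\end{equation}
which fixes the endpoints $u=0\leftrightarrow\nu=0$ and $u=1\leftrightarrow\nu=1$. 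The point of this choice is that it sends $(1-zu)$ to a pure power of the common denominator $1-z+z\nu$, so that the exponent $\beta^{\prime\prime}$ is evacuated from its variable slot and absorbed into the $\gamma$-dependent bookkeeping, thereby producing the shifted parameter $\gamma-\beta-\beta^{\prime}-\beta^{\prime\prime}$ in the last slot.

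First I would record the elementary identities generated by the substitution, namely $1-u=\frac{(1-z)(1-\nu)}{1-z+z\nu}$ and $\mathrm{d}u=\frac{1-z}{(1-z+z\nu)^{2}}\,\mathrm{d}\nu$, together with
\begin{equation}
1-xu=\frac{(1-z)\left(1-\frac{z-x}{z-1}\nu\right)}{1-z+z\nu},\quad 1-yu=\frac{(1-z)\left(1-\frac{z-y}{z-1}\nu\right)}{1-z+z\nu},\quad 1-zu=\frac{1-z}{1-z+z\nu}.
\end{equation}
Inserting these into $IR_{F_{D}}$ and separating the integrand into (i) the new measure $\nu^{\alpha-1}(1-\nu)^{\gamma-\alpha-1}\,\mathrm{d}\nu$, (ii) the three new variable factors, and (iii) all remaining $\nu$-independent constants, is the core computation. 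The delicate part is the bookkeeping of the two recurring quantities $1-z+z\nu$ and $1-z$. The powers of $1-z+z\nu$ accumulate to $-\gamma+\beta+\beta^{\prime}+\beta^{\prime\prime}$; rewriting $1-z+z\nu=(1-z)\left(1-\frac{z}{z-1}\nu\right)$ then simultaneously supplies the third variable factor $\left(1-\frac{z}{z-1}\nu\right)^{-(\gamma-\beta-\beta^{\prime}-\beta^{\prime\prime})}$ and contributes the closing constant power of $(1-z)$.

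The last step is to verify that the surviving constant powers of $(1-z)$ collapse to precisely $(1-z)^{-\alpha}$: collecting $(\gamma-\alpha)$ from the measure, $-\beta-\beta^{\prime}-\beta^{\prime\prime}$ from the three variable factors, and $-(\gamma-\beta-\beta^{\prime}-\beta^{\prime\prime})$ from the rewriting of the denominator yields exactly $-\alpha$. Since $\alpha$ and $\gamma$ are left unchanged by the substitution, the normalizing constant $\frac{\Gamma(\gamma)}{\Gamma(\alpha)\Gamma(\gamma-\alpha)}$ is identical on both sides and factors out, so the transformed integral is literally the representation of $F_{D}\left(\alpha,\beta,\beta^{\prime},\gamma-\beta-\beta^{\prime}-\beta^{\prime\prime},\gamma,\frac{z-x}{z-1},\frac{z-y}{z-1},\frac{z}{z-1}\right)$. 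I expect the only genuine obstacle to be the sign-and-power bookkeeping in steps (ii)--(iii); once the substitution is fixed everything else is forced, and absolute convergence of both integrals is guaranteed by the condition $\mathrm{Re}(\alpha)>0,\ \mathrm{Re}(\gamma-\alpha)>0$ inherited from (\ref{OloklAnapa}).
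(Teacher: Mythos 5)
Your proposal is correct and coincides with the paper's own proof: the paper applies exactly the same M\"obius substitution $u=\nu/(1-z+\nu z)$ to the Euler integral representation (\ref{OloklAnapa}), obtains the same factor identities for $1-u$, $1-ux$, $1-uy$, $1-uz$, and collects the powers of $1-z$ into $(1-z)^{-\alpha}$ just as you describe. Your exponent bookkeeping (the accumulated power $-(\gamma-\beta-\beta^{\prime}-\beta^{\prime\prime})$ of $1-z+z\nu$ and the net power $-\alpha$ of $1-z$) checks out, so nothing further is needed.
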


\bigskip

\begin{proof}
Applying the tranformation in equation (\ref{IRT1}):%
\begin{equation}
u=\frac{\nu }{1-z+\nu z}=\frac{\nu }{(1-z)\left[ 1-\frac{\nu z}{z-1}\right] }
\end{equation}%
we get:%
\begin{eqnarray}
1-u &=&\frac{1-\nu }{1-\nu \frac{z}{z-1}},\quad (1-ux)^{-\beta }=\left( 
\frac{[1-\frac{\nu (z-x)}{z-1}]}{[1-\frac{\nu z}{z-1}}\right) ^{-\beta } 
\notag \\
(1-u\text{ }y)^{-\beta ^{\prime }} &=&\left( \frac{[1-\frac{\nu (z-y)}{z-1}]%
}{1-\frac{\nu z}{z-1}}\right) ^{-\beta ^{\prime }},\quad (1-uz)^{-\beta
^{\prime \prime }}=\left( \frac{1}{[1-\frac{\nu z}{z-1}]}\right) ^{-\beta
^{\prime \prime }}
\end{eqnarray}%
Thus 
\begin{eqnarray}
IRF_{D} &=&(1-z)^{-\alpha }\times  \notag \\
&&\int_{0}^{1}d\nu \text{ }\nu ^{\alpha -1}(1-\nu )^{\gamma -\alpha
-1}(1-\nu \frac{z-x}{z-1})^{-\beta }(1-\nu \frac{z-y}{z-1})^{-\beta ^{\prime
}}(1-\nu \frac{z}{z-1})^{-(\gamma -\beta -\beta ^{\prime }-\beta ^{\prime
\prime })}  \notag \\
&&
\end{eqnarray}%
and proposition follows.
\end{proof}

\bigskip

\ \ \ \ \ \ 

\begin{proposition}
\ \ \ \ \ \ \ \ The following identity holds:%
\begin{eqnarray}
&&\frac{1}{|a|}\sqrt[2]{\frac{z_{j}}{z_{m}}}\sqrt[2]{\frac{z_{m}-z_{3}}{%
z_{j}-z_{3}}}\frac{1}{\sqrt[2]{z_{m}-z_{3}}}F_{D}\left( \frac{1}{2},1,\frac{1%
}{2},-\frac{1}{2},\frac{3}{2},\frac{z_{j}(1-z_{3})}{z_{j}-z_{3}},\frac{z_{j}%
}{z_{m}}\frac{(z_{m}-z_{3})}{(z_{j}-z_{3})},\frac{z_{j}}{z_{j}-z_{3}}\right)
\notag \\
&=&\frac{1}{|a|}\sqrt[2]{\frac{z_{j}}{z_{m}}}\sqrt[2]{\frac{z_{m}-z_{3}}{%
z_{j}-z_{3}}}\frac{1}{\sqrt[2]{z_{m}-z_{3}}}\times 
\Biggl\{%
\frac{-z_{3}}{1-z_{3}}F_{D}\left( \frac{1}{2},1,\frac{1}{2},\frac{1}{2},%
\frac{3}{2},\frac{z_{j}(1-z_{3})}{z_{j}-z_{3}},\frac{z_{j}}{z_{m}}\frac{%
(z_{m}-z_{3})}{(z_{j}-z_{3})},\frac{z_{j}}{z_{j}-z_{3}}\right) +  \notag \\
&&\frac{1}{1-z_{3}}F_{1}\left( \frac{1}{2},\frac{1}{2},\frac{1}{2},\frac{3}{2%
},\frac{z_{j}}{z_{m}}\frac{(z_{m}-z_{3})}{(z_{j}-z_{3})},\frac{z_{j}}{%
z_{j}-z_{3}}\right) 
\Biggr\}
\notag \\
&&
\end{eqnarray}
\end{proposition}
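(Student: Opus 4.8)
The plan is to prove the identity directly from the Euler integral representation (\ref{OloklAnapa}), reducing it to a pointwise algebraic relation between the integrands. First I would observe that the common prefactor $\frac{1}{|a|}\sqrt{\frac{z_{j}}{z_{m}}}\sqrt{\frac{z_{m}-z_{3}}{z_{j}-z_{3}}}\frac{1}{\sqrt{z_{m}-z_{3}}}$ is identical on both sides and therefore cancels; writing $x=\frac{z_{j}(1-z_{3})}{z_{j}-z_{3}}$, $y=\frac{z_{j}}{z_{m}}\frac{z_{m}-z_{3}}{z_{j}-z_{3}}$, and $w=\frac{z_{j}}{z_{j}-z_{3}}$, the proposition becomes equivalent to the pure hypergeometric identity
\begin{equation*}
F_{D}\left(\frac{1}{2},1,\frac{1}{2},-\frac{1}{2},\frac{3}{2},x,y,w\right)=\frac{-z_{3}}{1-z_{3}}F_{D}\left(\frac{1}{2},1,\frac{1}{2},\frac{1}{2},\frac{3}{2},x,y,w\right)+\frac{1}{1-z_{3}}F_{1}\left(\frac{1}{2},\frac{1}{2},\frac{1}{2},\frac{3}{2},y,w\right).
\end{equation*}

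Next I would apply (\ref{OloklAnapa}) to each of the three functions. The crucial bookkeeping point is that all three share $\alpha=\frac{1}{2}$ and $\gamma=\frac{3}{2}$, hence $\gamma-\alpha=1$: the normalising constant $\frac{\Gamma(3/2)}{\Gamma(1/2)\Gamma(1)}$ is common, the weight collapses to $t^{\alpha-1}(1-t)^{\gamma-\alpha-1}=t^{-1/2}$ (the $(1-t)$ power vanishes), and the factor $(1-ty)^{-1/2}$ occurs in every integrand. After cancelling these common pieces it suffices to establish, for all $t\in[0,1]$, the integrand identity
\begin{equation*}
(1-tx)^{-1}(1-tw)^{1/2}=\frac{-z_{3}}{1-z_{3}}\,(1-tx)^{-1}(1-tw)^{-1/2}+\frac{1}{1-z_{3}}\,(1-tw)^{-1/2}.
\end{equation*}
Here the left-hand factor $(1-tw)^{1/2}$ comes from $\beta^{\prime\prime}=-\frac{1}{2}$, whereas the two right-hand terms carry $(1-tw)^{-1/2}$ from $\beta^{\prime\prime}=+\frac{1}{2}$, and the $F_{1}$ term has lost its $(1-tx)$ factor because its corresponding exponent is $0$.

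To finish I would clear denominators: multiplying the displayed integrand identity by $(1-tw)^{1/2}(1-tx)$ reduces it to $1-tw=\frac{-z_{3}+(1-tx)}{1-z_{3}}=1-\frac{tx}{1-z_{3}}$, i.e.\ to the single scalar relation $w=\frac{x}{1-z_{3}}$. Substituting the values gives $\frac{x}{1-z_{3}}=\frac{z_{j}(1-z_{3})}{(z_{j}-z_{3})(1-z_{3})}=\frac{z_{j}}{z_{j}-z_{3}}=w$, which is exactly what is required; the integrand identity then holds termwise, and integrating against the common weight $t^{-1/2}(1-ty)^{-1/2}$ yields the proposition.

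I expect no serious obstacle: the real content is the recognition that the asserted relation is integrand-wise algebraic rather than a genuine transformation of the multiple series. The only points needing care are confirming that the three functions truly share $\alpha$ and $\gamma$ (so that termwise comparison of integrands is legitimate) and noting that the branches of $(1-tw)^{\pm 1/2}$ are the principal ones guaranteed by the convergence domain $|z_{j}|<1$ of (\ref{OloklAnapa}); both are routine.
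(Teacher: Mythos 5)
Your proof is correct and follows essentially the same route as the paper: both start from the Euler integral representation (\ref{OloklAnapa}) with $\alpha=\tfrac12$, $\gamma=\tfrac32$ and reduce the identity to algebra on the integrand, the paper by splitting $(1-ux_{3})^{1/2}=(1-ux_{3})(1-ux_{3})^{-1/2}$ and you by verifying the three-term relation pointwise. Your write-up is in fact a completed version of the paper's sketch (which stops after splitting the integral), since you isolate explicitly the scalar relation $x_{3}=x_{1}/(1-z_{3})$, i.e.\ $\frac{z_{j}}{z_{j}-z_{3}}=\frac{1}{1-z_{3}}\cdot\frac{z_{j}(1-z_{3})}{z_{j}-z_{3}}$, on which the coefficients $\frac{-z_{3}}{1-z_{3}}$ and $\frac{1}{1-z_{3}}$ ultimately rest.
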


\ \ \ \ \ \ \ \ \ \ \ \ \ \ \ \ \ \ 

\ \ \ \ \ \ \ \ \ \ \ \ \ \ \ \ \ \ \ \ \ \ \ \ \ \ \ \ \ \ \ \ \ \ \ \ \ \
\ \ \ \ \ \ \ \ \ \ \ \ \ \ \ \ \ \ \ \ \ \ \ \ \ \ \ \ \ \ \ \ \ \ \ \ \ \
\ \ \ \ \ \ \ \ \ \ \ \ \ \ \ \ \ \ \ \ \ \ \ \ \ \ \ \ \ \ \ \ \ \ \ \ \ \
\ \ \ \ \ \ \ \ \ \ \ \ \ \ \ \ \ \ \ \ \ \ \ \ \ \ \ \ \ \ \ \ \ \ \ \ \ \
\ \ \ \ \ \ \ \ \ \ \ \ \ \ \ \ \ \ \ \ \ \ \ \ \ \ \ \ \ \ \ \ \ \ \ \ \ \
\ \ \ \ \ \ \ \ \ \ \ \ \ \ \ \ \ \ \ \ \ \ \ \ \ \ \ \ \ \ \ \ \ \ \ \ \ \
\ \ \ \ \ \ \ \ \ \ \ \ \ \ \ \ \ \ \ \ \ \ \ \ \ \ \ \ \ \ \ \ \ \ \ \ \ \
\ \ \ \ \ \ \ \ \ \ \ \ \ \ \ \ \ \ \ \ \ \ \ \ \ \ \ \ \ \ \ \ \ \ \ \ \ \
\ \ \ \ \ \ \ \ \ \ \ \ \ \ \ \ \ \ \ \ \ \ \ \ \ \ \ \ \ \ \ \ \ \ \ \ \ \
\ \ \ \ \ \ \ \ \ \ \ \ \ \ \ \ \ \ \ \ \ \ \ \ \ \ \ \ \ \ \ \ \ \ \ \ \ \
\ \ \ \ \ \ \ \ \ \ \ \ \ \ \ \ \ \ \ \ \ \ \ \ \ \ \ \ \ \ \ \ \ \ \ \ \ \
\ \ \ \ \ \ \ \ \ \ \ \ \ \ \ \ \ \ \ \ \ \ \ \ \ \ \ \ \ \ \ \ \ \ \ \ \ \
\ \ \ \ \ \ \ \ \ \ \ \ \ \ \ \ \ \ \ \ \ \ \ \ \ \ \ \ \ \ \ \ \ \ \ \ \ \
\ \ \ \ \ \ \ \ \ \ \ \ \ \ \ \ \ \ \ \ \ \ \ \ \ \ \ \ \ \ \ \ \ \ \ \ \ \
\ \ \ \ \ \ \ \ \ \ \ \ \ \ \ \ \ \ \ \ \ \ \ \ \ \ \ \ \ \ \ \ \ \ \ \ \ \
\ \ \ \ \ \ \ \ \ \ \ \ \ \ \ \ \ \ \ \ \ \ \ \ \ \ \ \ \ \ \ \ \ \ \ \ \ \
\ \ \ \ \ \ \ \ \ \ \ \ \ \ \ \ \ \ \ \ \ \ \ \ \ \ \ \ \ \ \ \ \ \ \ \ \ \
\ \ \ \ \ \ \ \ \ \ \ \ \ \ \ \ \ \ \ \ \ \ \ \ \ \ \ \ \ \ 

\begin{proof}
We start with the integral representation of Lauricella\textquotedblright s
hypergeometric function:%
\begin{eqnarray}
F_{D}\left( \frac{1}{2},1,\frac{1}{2},-\frac{1}{2},\frac{3}{2}%
,x_{1},x_{2},x_{3}\right) &=&\int_{0}^{1}du\text{ }%
u^{-1/2}(1-ux_{1})^{-1}(1-ux_{2})^{-1/2}(1-ux_{3})^{1/2}\frac{\Gamma (3/2)}{%
\Gamma (1/2)}  \notag \\
&=&\frac{\Gamma (3/2)}{\Gamma (1/2)}\int_{0}^{1}\frac{du}{\sqrt[2]{u}}\frac{1%
}{1-ux_{1}}\frac{1}{\sqrt[2]{1-ux_{2}}}\frac{1-ux_{3}}{\sqrt[2]{1-ux_{3}}} 
\notag \\
&=&\frac{\Gamma (3/2)}{\Gamma (1/2)}%
\Biggl[%
\int_{0}^{1}\frac{du}{\sqrt[2]{u}(1-ux_{1})}\frac{1}{\sqrt[2]{1-ux_{2}}}%
\frac{1}{\sqrt[2]{1-ux_{3}}}-  \notag \\
&&\int_{0}^{1}\frac{du\text{ }u\text{ }x_{3}}{\sqrt[2]{u}(1-ux_{1})}\frac{1}{%
\sqrt[2]{1-ux_{2}}}\frac{1}{\sqrt[2]{1-ux_{3}}}%
\Biggr]%
\end{eqnarray}%
with $x_{1}:=\frac{z_{j}(1-z_{3})}{z_{j}-z_{3}},x_{2}:=\frac{z_{j}}{z_{m}}%
\frac{(z_{m}-z_{3})}{(z_{j}-z_{3})},x_{3}:=\frac{z_{j}}{z_{j}-z_{3}}.$
\end{proof}

\section{\protect\bigskip Time-delay assuming vanishing $\Lambda $.}

For the time-delay, in the case of vansihing cosmological constant, we
derive the equation:%
\begin{equation}
ct=\int^{r}\frac{r^{2}(r^{2}+a^{2})}{\pm \Delta \sqrt{R}}\mathrm{d}r+\int^{r}%
\frac{2GMr}{\pm c^{2}\Delta \sqrt{R}}(a^{2}-\Phi a)\mathrm{d}r+\int^{\theta }%
\frac{a^{2}\cos ^{2}\theta \mathrm{d}\theta }{\pm \sqrt{\Theta }}
\label{Timedelay}
\end{equation}%
In calculating the last angular term in (\ref{Timedelay}) and using the
variable $z=\cos ^{2}\theta ,$ one of the integrals we need to calculate is:%
\begin{equation}
\frac{1}{2}\frac{a^{2}}{|a|}\int_{0}^{z_{j}}\frac{\mathrm{d}z\text{ }z}{%
\sqrt{z(z_{m}-z)(z-z_{3})}}
\end{equation}%
Indeed, its calculation in closed analytic form gave us the result:%
\begin{eqnarray}
&&\frac{1}{2}\frac{a^{2}}{|a|}\int_{0}^{z_{j}}\frac{\mathrm{d}z\text{ }z}{%
\sqrt{z(z_{m}-z)(z-z_{3})}}  \notag \\
&=&\frac{1}{2}\frac{a^{2}}{|a|}\frac{z_{j}^{2}(z_{m}-z_{j})}{z_{m}\sqrt{%
(z_{j}-z_{m})(z_{3}-z_{j})z_{j}}}F_{1}\left( 1,\frac{3}{2},\frac{1}{2},\frac{%
5}{2},\frac{z_{j}}{z_{m}},\frac{z_{j}}{z_{m}}\frac{z_{m}-z_{3}}{z_{j}-z_{3}}%
\right) \frac{\Gamma (1)\Gamma (3/2)}{\Gamma (5/2)}  \notag \\
&&
\end{eqnarray}%
In total we derive for the angular integrals in (\ref{Timedelay}):%
\begin{eqnarray}
\int^{\theta }\frac{a^{2}\cos ^{2}\theta \mathrm{d}\theta }{\pm \sqrt{\Theta 
}} &\equiv &A^{\mathrm{time-delay}}=  \notag \\
&&\frac{1}{2}\frac{a^{2}}{|a|}\frac{(z_{m}-z_{S})z_{m}}{\sqrt{%
z_{m}(z_{m}-z_{S})(z_{m}-z_{3})}}F_{1}\left( \frac{1}{2},-\frac{1}{2},\frac{1%
}{2},\frac{3}{2},\frac{z_{m}-z_{S}}{z_{m}},\frac{z_{m}-z_{S}}{z_{m}-z_{3}}%
\right) \frac{\Gamma (1/2)}{\Gamma (3/2)}+  \notag \\
&&[1-\mathrm{sign}(\theta _{S}\circ \theta _{mS})]\frac{1}{2}\frac{a^{2}}{|a|%
}\frac{z_{S}^{2}(z_{m}-z_{S})}{z_{m}\sqrt{(z_{S}-z_{m})(z_{3}-z_{S})z_{S}}} 
\notag \\
&&\times F_{1}\left( 1,\frac{3}{2},\frac{1}{2},\frac{5}{2},\frac{z_{S}}{z_{m}%
},\frac{z_{S}}{z_{m}}\frac{z_{m}-z_{3}}{z_{S}-z_{3}}\right) \frac{\Gamma
(1)\Gamma (3/2)}{\Gamma (5/2)}  \notag \\
&&+\frac{1}{2}\frac{a^{2}}{|a|}\frac{(z_{m}-z_{O})z_{m}}{\sqrt{%
z_{m}(z_{m}-z_{O})(z_{m}-z_{3})}}F_{1}\left( \frac{1}{2},-\frac{1}{2},\frac{1%
}{2},\frac{3}{2},\frac{z_{m}-z_{O}}{z_{m}},\frac{z_{m}-z_{O}}{z_{m}-z_{3}}%
\right) \frac{\Gamma (1/2)}{\Gamma (3/2)}+  \notag \\
&&[1-\mathrm{sign}(\theta _{O}\circ \theta _{mO})]\frac{1}{2}\frac{a^{2}}{|a|%
}\frac{z_{O}^{2}(z_{m}-z_{O})}{z_{m}\sqrt{(z_{O}-z_{m})(z_{3}-z_{O})z_{O}}} 
\notag \\
&&\times F_{1}\left( 1,\frac{3}{2},\frac{1}{2},\frac{5}{2},\frac{z_{O}}{z_{m}%
},\frac{z_{O}}{z_{m}}\frac{z_{m}-z_{3}}{z_{O}-z_{3}}\right) \frac{\Gamma
(1)\Gamma (3/2)}{\Gamma (5/2)}  \notag \\
&&+2(m-1)\frac{1}{2}\frac{a^{2}}{|a|}\frac{z_{m}^{2}}{\sqrt{%
z_{m}^{2}(z_{m}-z_{3})}}\frac{\Gamma (3/2)\Gamma (1/2)}{\Gamma (2)}F\left( 
\frac{1}{2},\frac{1}{2},2,\frac{z_{m}}{z_{m}-z_{3}}\right)
\end{eqnarray}

We now turn our attention to the calculation of the radial contribution to
time-delay in equation (\ref{Timedelay}) . Indeed, the first term can be
written:%
\begin{eqnarray}
&&\int_{\alpha }^{r_{S}}\frac{r^{2}(r^{2}+a^{2})}{\Delta \sqrt{R}}\mathrm{d}r
\notag \\
&=&\int_{\alpha }^{r_{S}}\frac{r^{2}\mathrm{d}r}{\sqrt{R}}+\int_{\alpha
}^{r_{S}}\frac{2GMr}{c^{2}\sqrt{R}}\mathrm{d}r-\int_{\alpha }^{r_{S}}\frac{%
2a^{2}GMr\mathrm{d}r}{c^{2}\Delta \sqrt{R}}+\frac{4G^{2}M^{2}}{c^{4}}%
\int_{\alpha }^{r_{S}}\frac{\left( 1-\frac{a^{2}-2GMr}{\Delta }\right) }{%
\sqrt{R}}\mathrm{d}r  \notag \\
&&
\end{eqnarray}%
In total for this radial term the exact integration yields the result:%
\begin{eqnarray*}
\int_{\alpha }^{r_{S}}\frac{r^{2}(r^{2}+a^{2})}{\Delta \sqrt{R}}\mathrm{d}r
&=&\frac{\alpha ^{2}\Omega ^{\prime \prime }\mathbf{z}_{S}}{\sqrt{\frac{%
r_{S}-\alpha }{r_{S}-\beta }}} \\
&&\times F_{D}\left( \frac{1}{2},-2,2,\frac{1}{2},\frac{1}{2},\frac{3}{2},%
\frac{\beta }{\alpha }\frac{r_{S}-\alpha }{r_{S}-\beta },\frac{r_{S}-\alpha 
}{r_{S}-\beta },\frac{r_{S}-\alpha }{r_{S}-\beta }\frac{\beta -\gamma }{%
\alpha -\gamma },\frac{\delta -\beta }{\delta -\alpha }\frac{r_{S}-\alpha }{%
r_{S}-\beta }\right) \frac{\Gamma (1/2)}{\Gamma (3/2)} \\
&&+\frac{\alpha \Omega ^{\prime \prime }\mathbf{z}_{S}}{\sqrt{\frac{%
r_{S}-\alpha }{r_{S}-\beta }}}\frac{2GM}{c^{2}}F_{D}\left( \frac{1}{2},-1,1,%
\frac{1}{2},\frac{1}{2},\frac{3}{2},\frac{\beta }{\alpha }\frac{r_{S}-\alpha 
}{r_{S}-\beta },\frac{r_{S}-\alpha }{r_{S}-\beta },\frac{r_{S}-\alpha }{%
r_{S}-\beta }\frac{\beta -\gamma }{\alpha -\gamma },\frac{1}{\omega }\frac{%
r_{S}-\alpha }{r_{S}-\beta }\right) 2 \\
&&-\frac{\mathbf{z}_{S}\Omega ^{\prime \prime }A_{+}^{td}}{(r_{+}-\alpha )%
\sqrt{\frac{r_{S}-\alpha }{r_{S}-\beta }}} \\
&&\times 
\Biggl\{%
F_{D}\left( \frac{1}{2},1,\frac{1}{2},\frac{1}{2},\frac{3}{2},\frac{\beta
-r_{+}}{\alpha -r_{+}}\frac{r_{S}-\alpha }{r_{S}-\beta },\frac{r_{S}-\alpha 
}{r_{S}-\beta }\frac{\beta -\gamma }{\alpha -\gamma },\frac{\delta -\beta }{%
\delta -\alpha }\frac{r_{S}-\alpha }{r_{S}-\beta }\right) \frac{\Gamma (1/2)%
}{\Gamma (3/2)} \\
&&-\frac{r_{S}-\alpha }{r_{S}-\beta }F_{D}\left( \frac{3}{2},1,\frac{1}{2},%
\frac{1}{2},\frac{5}{2},\frac{\beta -r_{+}}{\alpha -r_{+}}\frac{r_{S}-\alpha 
}{r_{S}-\beta },\frac{r_{S}-\alpha }{r_{S}-\beta }\frac{\beta -\gamma }{%
\alpha -\gamma },\frac{\delta -\beta }{\delta -\alpha }\frac{r_{S}-\alpha }{%
r_{S}-\beta }\right) \frac{\Gamma (3/2)}{\Gamma (5/2)}%
\Biggr\}
\\
&&-\frac{\mathbf{z}_{S}\Omega ^{\prime \prime }A_{-}^{td}}{(r_{-}-\alpha )%
\sqrt{\frac{r_{S}-\alpha }{r_{S}-\beta }}} \\
&&\times 
\Biggl\{%
F_{D}\left( \frac{1}{2},1,\frac{1}{2},\frac{1}{2},\frac{3}{2},\frac{\beta
-r_{-}}{\alpha -r_{-}}\frac{r_{S}-\alpha }{r_{S}-\beta },\frac{r_{S}-\alpha 
}{r_{S}-\beta }\frac{\beta -\gamma }{\alpha -\gamma },\frac{\delta -\beta }{%
\delta -\alpha }\frac{r_{S}-\alpha }{r_{S}-\beta }\right) \frac{\Gamma (1/2)%
}{\Gamma (3/2)} \\
&&-\frac{r_{S}-\alpha }{r_{S}-\beta }F_{D}\left( \frac{3}{2},1,\frac{1}{2},%
\frac{1}{2},\frac{5}{2},\frac{\beta -r_{-}}{\alpha -r_{-}}\frac{r_{S}-\alpha 
}{r_{S}-\beta },\frac{r_{S}-\alpha }{r_{S}-\beta }\frac{\beta -\gamma }{%
\alpha -\gamma },\frac{\delta -\beta }{\delta -\alpha }\frac{r_{S}-\alpha }{%
r_{S}-\beta }\right) \frac{\Gamma (3/2)}{\Gamma (5/2)}%
\Biggr\}
\\
&&+\frac{4G^{2}M^{2}}{c^{4}}\frac{\Omega ^{\prime \prime }\mathbf{z}_{S}}{%
\sqrt{\frac{r_{S}-\alpha }{r_{S}-\beta }}}F_{1}\left( \frac{1}{2},\frac{1}{2}%
,\frac{1}{2},\frac{3}{2},\frac{r_{S}-\alpha }{r_{S}-\beta }\frac{\beta
-\gamma }{\alpha -\gamma },\frac{\delta -\beta }{\delta -\alpha }\frac{%
r_{S}-\alpha }{r_{S}-\beta }\right) \frac{\Gamma (1/2)}{\Gamma (3/2)} \\
&&-\frac{4G^{2}M^{2}}{c^{4}}A_{+1}^{td}\frac{\Omega ^{\prime \prime }(-%
\mathbf{z}_{S})}{(r_{+}-\alpha )\sqrt{\frac{r_{S}-\alpha }{r_{S}-\beta }}} \\
&&\times 
\Biggl\{%
F_{D}\left( \frac{1}{2},1,\frac{1}{2},\frac{1}{2},\frac{3}{2},\frac{\beta
-r_{+}}{\alpha -r_{+}}\frac{r_{S}-\alpha }{r_{S}-\beta },\frac{r_{S}-\alpha 
}{r_{S}-\beta }\frac{\beta -\gamma }{\alpha -\gamma },\frac{\delta -\beta }{%
\delta -\alpha }\frac{r_{S}-\alpha }{r_{S}-\beta }\right) \frac{\Gamma (1/2)%
}{\Gamma (3/2)} \\
&&-\frac{r_{S}-\alpha }{r_{S}-\beta }F_{D}\left( \frac{3}{2},1,\frac{1}{2},%
\frac{1}{2},\frac{5}{2},\frac{\beta -r_{+}}{\alpha -r_{+}}\frac{r_{S}-\alpha 
}{r_{S}-\beta },\frac{r_{S}-\alpha }{r_{S}-\beta }\frac{\beta -\gamma }{%
\alpha -\gamma },\frac{\delta -\beta }{\delta -\alpha }\frac{r_{S}-\alpha }{%
r_{S}-\beta }\right) \frac{\Gamma (3/2)}{\Gamma (5/2)}%
\Biggr\}
\\
&&-\frac{4G^{2}M^{2}}{c^{4}}A_{-1}^{td}\frac{\Omega ^{\prime \prime }(-%
\mathbf{z}_{S})}{(r_{-}-\alpha )\sqrt{\frac{r_{S}-\alpha }{r_{S}-\beta }}} \\
&&\times 
\Biggl\{%
F_{D}\left( \frac{1}{2},1,\frac{1}{2},\frac{1}{2},\frac{3}{2},\frac{\beta
-r_{-}}{\alpha -r_{-}}\frac{r_{S}-\alpha }{r_{S}-\beta },\frac{r_{S}-\alpha 
}{r_{S}-\beta }\frac{\beta -\gamma }{\alpha -\gamma },\frac{\delta -\beta }{%
\delta -\alpha }\frac{r_{S}-\alpha }{r_{S}-\beta }\right) \frac{\Gamma (1/2)%
}{\Gamma (3/2)} \\
&&-\frac{r_{S}-\alpha }{r_{S}-\beta }F_{D}\left( \frac{3}{2},1,\frac{1}{2},%
\frac{1}{2},\frac{5}{2},\frac{\beta -r_{-}}{\alpha -r_{-}}\frac{r_{S}-\alpha 
}{r_{S}-\beta },\frac{r_{S}-\alpha }{r_{S}-\beta }\frac{\beta -\gamma }{%
\alpha -\gamma },\frac{\delta -\beta }{\delta -\alpha }\frac{r_{S}-\alpha }{%
r_{S}-\beta }\right) \frac{\Gamma (3/2)}{\Gamma (5/2)}%
\Biggr\}
\\
&&
\end{eqnarray*}

\end{document}